\DeclareMathAlphabet{\pazocal}{OMS}{zplm}{m}{n}
\newtheorem{theorem}{Theorem}[section]
\newtheorem{definition}[theorem]{Definition}
\newtheorem{corollary}[theorem]{Corollary}
\newtheorem{proposition}[theorem]{Proposition}
\def\w#1{\mathop{:}\nolimits\!#1\!\mathop{:}\nolimits}
\newtheorem{example}[theorem]{Example}
\numberwithin{equation}{section}
\numberwithin{theorem}{section}
\newcommand{\qed}{\hfill$\Box$}
\newenvironment{proof}{\begin{trivlist}\item[]{\em Proof:}\/}{%
\qed\end{trivlist}}
\newcommand{\E}{{\mathbb E}}
\newcommand{\Z}{{\mathbb Z}}
\newcommand{\R}{{\mathbb R}}
\newcommand{\C}{{\mathbb C\hspace{0.05 ex}}}
\newcommand{\N}{{\mathbb N}}
\newcommand{\T}{{\mathbb T}}
\newcommand{\cf}{{\mathbbm 1}}
\newcommand{\ci}{{\rm i}}
\newcommand{\re}{{\rm Re\,}}
\newcommand{\im}{{\rm Im\,}}
\newcommand{\rme}{{\rm e}}
\newcommand{\rmd}{{\rm d}}
\newcommand{\FT}[1]{\widehat{#1}}
\newcommand{\cancelm}[2]{\widehat{#2}^{(#1)}}
\newcommand{\norm}[1]{\Vert #1\Vert}
\newcommand{\defset}[2]{ \left\{ #1\left|\,
 #2\makebox[0cm]{$\displaystyle\phantom{#1}$}\right.\!\right\} }
\newcommand{\set}[1]{\{#1\}}
\newcommand{\mean}[1]{\langle #1\rangle}
\newcommand{\vep}{\varepsilon}
\newcommand{\defem}[1]{{\em #1\/}}
\newcommand{\qand}{\quad\text{and}\quad}
\newcounter{jlisti}
\newcommand{\Cgen}{g_{\mathrm{c}}}
\newcommand{\Wgen}{G_{\mathrm{w}}}
\newcommand{\Mgen}{G_{\mathrm{m}}}
\newcommand{\NPC}{\text{NPC}}
\newcommand{\Weql}{W^{\text{eql}}}
\newcommand{\email}[1]{E-mail: \tt #1}
\newcommand{\emailjani}{\email{jani.lukkarinen@helsinki.fi}}
\newcommand{\emailmatteo}{\email{matteo.marcozzi@helsinki.fi}}
\newcommand{\addressjani}{\em University of Helsinki,
Department of Mathematics and Statistics\\
\em P.O. Box 68,
FI-00014 Helsingin yliopisto, Finland}
\title{Wick polynomials and time-evolution of cumulants}
\author{Jani Lukkarinen\thanks{\emailjani} , Matteo Marcozzi\thanks{\emailmatteo}\\[1em]
$\,^*, \,^\dag$\addressjani}
\date{\today}
\begin{document}
 \selectlanguage{english}
\maketitle
\begin{abstract}
We show how Wick polynomials of random variables
can be defined combinatorially as the unique choice which removes all ``internal 
contractions'' from 
the related cumulant expansions, also in a non-Gaussian case.
We discuss how an expansion in terms of the Wick polynomials can be used for
derivation of a hierarchy of equations for the time-evolution of cumulants.
These methods are then applied to simplify the formal derivation of
the Boltzmann-Peierls equation in the kinetic scaling limit of the discrete nonlinear 
Schr\"{o}dinger equation (DNLS) with suitable random
initial data.  We also present a reformulation of the standard perturbation 
expansion using cumulants which could
simplify the problem of a rigorous derivation of the Boltzmann-Peierls equation 
by separating the analysis of the solutions to the
Boltzmann-Peierls equation from the analysis of the corrections.  This latter 
scheme is general and not tied to the DNLS evolution equations.
\end{abstract}

\maketitle

\tableofcontents

\section{Introduction}

Wick polynomials, also called Wick products, arose first in quantum field theory 
as a way of regularizing products of field operators.
The principal goal there was to replace monomial products by polynomials with state 
dependent coefficients, chosen so as to remove singular terms appearing in the 
associated perturbation expansion.  

The procedure can also be applied in more general probabilistic settings.  The 
following definition is given in Wikipedia
\cite{wick_wiki} and in the Encyclopedia of Mathematics \cite{wick_encMath}.  Consider $n$
(real) random variables $y_j$, $j=1,2,\ldots,n$, on some probability space 
$(\Omega,\mathcal{B},\mu)$ and denote expectation over the probability measure 
$\mu$ by $\mean{\cdot}$.  The Wick polynomial with powers
$k_j\ge 0$, $j=1,2,\ldots,n$, are then defined recursively in the total degree 
$k_1+k_2+\cdots+k_n$ by the following conditions:
\begin{enumerate}
 \item If $k_1=k_2=\cdots=k_n=0$, set $\w{y_1^{k_1}y_2^{k_2}\cdots 
y_n^{k_n}}=1$.
 \item If the total degree is greater than zero, require that 
$\mean{\w{y_1^{k_1}y_2^{k_2}\cdots y_n^{k_n}}}=0$.
 \item  For all $j$, require that the (algebraic) derivatives of the Wick 
polynomials satisfy
 \begin{align}\label{eq:Appell}
 \partial_{y_j} \w{y_1^{k_1}\cdots y_{j}^{k_j}\cdots y_n^{k_n}} = k_j 
\w{y_1^{k_1}\cdots y_{j}^{k_j-1}\cdots y_n^{k_n}}\, .  
 \end{align}
\end{enumerate}
These conditions have a unique solution for which $\w{y_1^{k_1}y_2^{k_2}\cdots 
y_n^{k_n}}$ is a polynomial of total degree $k_1+k_2+\cdots+k_n$ in the variables $y_j$. (The uniqueness is
algebraic, not only almost everywhere as random variables.  
That is, the conditions fix all coefficients of the polynomials.  This can be seen by induction 
in the order $|I|$: the requirement in item 3 fixes all new coefficients apart from the constant, which is then
fixed by the vanishing of the expectation value in item 2.)
The coefficients are polynomials of expectations of the 
random variables $y_j$, and hence depend on the measure $\mu$.  The first order 
polynomial is obtained by simply centering the variable, 
$\w{y_1}=y_1-\mean{y_1}$, but already at second order more complex structures 
appear,
$\w{y_1y_2}=y_1 y_2-\mean{y_1}y_2 -\mean{y_2}y_1-\mean{y_1y_2}+2 
\mean{y_1}\mean{y_2}$.

If the random variables have joint exponential moments, i.e., if there is 
$\beta>0$ such that 
$\mean{\rme^{\beta \sum_j |y_j|}}<\infty$,
the Wick polynomials can also be obtained by differentiating a fairly simple 
generating function.  It can then be defined for  $\lambda\in \R^n$, such that 
$|\lambda_j|<\beta$ for all $j$, by
\begin{align}
\Wgen(\lambda; y_1, \ldots, y_n)=
 \frac{\exp \left( \sum_{i=1}^n \lambda_i y_i \right)}{\mean{\exp \left( 
\sum_{i=1}^n \lambda_i y_i  \right)}} \, ,
\end{align}
and then for all $k_j\ge 0$, $j=1,2,\ldots,n$, 
\begin{align}
 \w{y_1^{k_1}y_2^{k_2}\cdots y_n^{k_n}} = \left.\partial^{k_1}_{\lambda_1} 
\cdots \partial^{k_n}_{\lambda_n}  \Wgen(\lambda;y_1, \ldots, y_n) 
\right\vert_{\lambda=0} \, .
\end{align}
The generating function $\Wgen(\lambda; y_1, \ldots, y_n)$ is also called ``Wick 
exponential'' and often denoted by 
``$\w{\exp \left( \sum_{i=1}^n \lambda_i y_i  \right)}$''.  For a derivation and 
basic properties of such Wick polynomials, see \cite{GiSu}.

The Wick polynomials become particularly simple to use if the joint measure of
$y$ is Gaussian.  
Defining the covariance matrix by $C_{j'j} := \mathrm{Cov}(y_{j'},y_j)$, a 
Gaussian measure has
$\mean{\exp \left(\lambda\cdot y\right)}=\rme^{\lambda\cdot 
\mean{y}+\lambda\cdot C\lambda/2}$.  Therefore,
the generating function of the Wick polynomials then reads simply
$\Wgen(\lambda; y)=\exp\!\left[ \lambda\cdot (y-\mean{y})-\lambda\cdot 
C\lambda/2\right]$, and the resulting Wick polynomials are 
closely related to Hermite polynomials.  This is the setting encountered in 
the original problem of renormalization of quantum field theories (the 
``unperturbed measures'' concern free fields and hence are Gaussian).

In the Gaussian case, one can also identify the Wick polynomials as arising from 
an orthogonalization procedure.  
Wiener chaos expansion and Malliavin calculus used for stochastic differential 
equations can be viewed as applications of 
such orthogonal projection techniques. \cite{peccati_taqqu2011}

In the non-Gaussian case, there are far fewer examples of applications of Wick 
polynomial techniques.  The computations become then
more involved.  For instance, there is no explicit formula for the generating 
function unless the inverse of the moment generating function happens to be 
known explicitly.  In addition, then
the polynomials typically no longer form an orthogonal set in $L^2(\mu)$.

The goals of the present contribution are two-fold.  In the first part, we show 
that
Wick polynomials have a natural combinatorial definition, closely connected to 
cumulants and the related cluster expansions 
of correlation functions.  We also rederive their main properties without 
resorting to the generating function, hence without assuming Gaussianity or the 
existence of exponential moments.

In the second part, we show how Wick polynomial expansions may be used in 
the analysis of
stochastic processes.
In particular, the goal there is to apply the expansion to study the 
time-evolution of the cumulants, i.e., of the connected correlation functions, 
of the process.  We will explain there why often it is cumulants, and not 
moments, which should be used as dynamical variables.
For simplicity, we consider here only processes whose dynamics are deterministic 
and given by a differential equation, such as Hamiltonian evolution in classical 
particle systems.  The randomness enters via the initial state.
However, generalization to Markovian stochastic dynamics should be 
straightforward, for instance,  
if the generator of the process maps polynomials to polynomials.

In the general setup, the best one can hope for are recursion relations leading 
to an infinite hierarchy
of equations connecting the evolution of the cumulants.  We explain in Section 
\ref{sec:cwdyn} what immediate
constructions are available for hierarchical study of the evolution of cumulants 
and Wick polynomials.

We give more explicit applications in section \ref{sec:DNLS} where we study the 
evolution on a lattice of particles following the discrete nonlinear Schr\"{o}dinger 
(DNLS) equation with random initial data.
In particular, our goal is to show how the Wick polynomial expansion of the 
dynamics greatly simplifies the (still only formal) derivation of the related 
Boltzmann-Peierls equation.  
This case is one of the few examples of nonlinear Hamiltonian evolution where a 
rigorous analysis of the related perturbation expansion has been possible so 
far: in \cite{LuSp11} it was proven that if the initial measure is a stationary 
Gibbs measure, then the time-correlations of the field follow an evolution 
equation derived using a perturbation expansion analogous to the one needed for 
the Boltzmann-Peierls equation.

An ultimate goal of the present reformulation of the evolution problem would be 
to complete the rigorous derivation,
and hence give a region of validity, of the Boltzmann-Peierls equation.  
We show how the Wick polynomial expansion could help in this goal by separating 
the problem of solving the effect of the 
Boltzmann-Peierls evolution from the estimation of the corrections arising from 
the wave nature of the microscopic evolution, such as constructive interference. 
 For the DNLS evolution the Wick polynomial expansion coincides with what was 
called ``pair truncation'' in \cite{LuSp11}.  In fact, the present work arose 
from an attempt to generalize this construction to other polynomial potentials, 
which we later realized to coincide with Wick polynomial expansions.

%The structure of the paper is as follows.  
It should already be apparent from the above example that in order to use the 
Wick polynomials some care is 
needed in the choice of notations to avoid being overcome by lengthy formulae 
and intractable combinatorial estimates.
We begin by explaining our choices in detail in Section \ref{sec:notations}.  
The first part containing the combinatorial definition and properties of Wick 
polynomials is given in Section \ref{sec:Wick}.  The second part discussing the 
use of Wick polynomial
expansions for the study of evolution of cumulants begins in Section 
\ref{sec:cwdyn}.  
We conclude it with the specific application to DNLS dynamics in Section 
\ref{sec:DNLS}.  
Some comments and possible further directions are discussed in Section 
\ref{sec:discussion}.

\subsection*{Acknowledgements}

We thank Giovanni Gallavotti, Dario Gasbarra, Antti Kupiainen, Peng Mei, Alessia Nota,
Sergio Simonella, and Herbert Spohn for useful discussions on the topic.
The research of J.\ Lukkarinen and M.\ Marcozzi has been supported by the 
Academy of Finland
via the Centre of Excellence in Analysis and Dynamics Research (project 271983) 
and from an Academy Project (project 258302), and also benefited from 
the support of the project EDNHS ANR-14-CE25-0011 of the French National 
Research Agency (ANR).

\section{Setup and notations}\label{sec:notations}

We consider a collection $y_j$, $j\in J$ where $J$ is some fixed nonempty index 
set, of real or complex random variables
on some probability space $(\Omega,\mathcal{B},\mu)$.
If $y_j$ are complex, we assume that the collection is closed under complex 
conjugation, i.e., that to every $j$ there is 
$j'\in J$ such that $y_{j'}= y_j^*$.

Expectation over the probability measure $\mu$ will be denoted by $\E$ or 
$\mean{\cdot}$.  In case the underlying measure needs to be identified, we 
denote the expectation by $\E_\mu$ or $\mean{\cdot}_\mu$.
We use sequences of indices, $I=(i_1,i_2,\ldots,i_n)\in J^n$,
to label monomials of the above random variables, with the following shorthand 
notation
\begin{align}\label{eq:defIpower}
 y^I = y_{i_1} y_{i_2}\cdots y_{i_n} = \prod_{k=1}^n y_{i_k} \, .
\end{align}
We also set $y^\emptyset := 1$ if $I$ is the empty sequence.
Since all $y_j$ commute with each other, we have $y^I=y^{I'}$ for any two 
sequences $I$, $I'$
which differ by a permutation.

We will need to operate not only with such sequences but also with their 
subsequences and ``partitions''.
This will be done by choosing a distinct label for each member of the sequence 
and collecting these into a set.
How the labelling is done is not important, as long as one takes care when 
combining two ``labelled'' sets.
We rely here on the following standard conventions: any sequence $(i_k)$ can be 
uniquely identified with the function $k\mapsto i_k$
which itself is uniquely determined by its graph, the subset 
$\defset{(k,i_k)}{k=1,2,\ldots,n}$ of $\N\times J$.
We consider subsequences to be subsets of the graph of the sequence.  Partitions 
of the sequence then correspond to partitions
of its graph which can be understood as partitions into nonempty subsequences.

Mathematically, this leads to the following structure.  Finite (sub)sequences of 
indices are now uniquely labelled by
the collection $\mathcal{I}$, which consists of those finite subsets $A\subset 
\N\times J$ with the
property that if $(n,j), (n',j')\in A$ and $(n,j)\ne (n',j')$
then $n\ne n'$.  We also allow the sequence to be empty which is identified with 
$\emptyset\in \mathcal{I}$.
For nonempty sets, the natural number in the first component serves as a 
distinct label for each member in $A$.
In addition, we can use the order of the natural numbers to collapse any $A\in 
\mathcal{I}$ back to a sequence $\check{A}$ in $J$:
Given $A\in \mathcal{I}$ with $n>0$ elements,
there is a unique bijection $g:\set{1,2,\ldots,n}\to A$ such that its first 
component is
increasing, $g(k)_1<g(k')_1$ for all $k<k'$.  Using this $g$, we define 
$\check{A}_k:=g(k)_2\in J$ for $k=1,2,\ldots,n$.

To each finite sequence $I=(i_k)$ of $n$ elements in $J$,
we assign $\tilde{I} := \defset{(k,i_k)}{k=1,2,\ldots , n}$ as the set of 
labels.
Obviously, then $\tilde{I}$ and any of its subsets belong to $\mathcal{I}$.
The following list summarizes some basic notations and definitions which will be 
used later without further remark.
\begin{enumerate}
  \item If $I$ is a sequence, and a set is needed by the notation, the set is 
chosen to be $\tilde{I}$.
  For instance, the notation ``$A\subset I$'' means $A\subset \tilde{I}$.
 \item The notation $\pazocal{P}(E)$ denotes the collection of partitions of the 
a set $E\in \mathcal{I}$.
If $I$ is a sequence, $\pazocal{P}(I) := \pazocal{P}(\tilde{I})$.
 \item If $A\in \mathcal{I}$ and it is used in a place of a sequence, the 
formula always refers to the collapsed sequence $\check{A}$
 obtained via the increasing bijection $g$ above.  For instance, then $y^A := 
y^{\check{A}} = \prod_{k=1}^n y_{g(k)_2}$.
 (Note that if $I$ is a sequence, then $y^{\tilde{I}}=y^I$ in agreement with 
(\ref{eq:defIpower}).)
 \item If $A\in \mathcal{I}$, we denote the corresponding sequence of random 
variables by $y_A:=(y_{\check{A}_k})_{k=1}^{|A|}$.
 \item If $m\in \N$ and $A\in \mathcal{I}$, the notation $\cancelm{m}{A}$ refers 
to a set where any element with label $m$ is cancelled,
 i.e., $\cancelm{m}{A} := \defset{(k,i_k)\in A}{k\ne m}$.  Note that it is 
possible that $\cancelm{m}{A}=A$.
 If $I$ is a sequence and $m\le |I|$, $\cancelm{m}{I}$ corresponds to a sequence 
which is obtained by removing the $m$:th member from $I$.
 \item If $I,I'$ are two sequences, they can be merged into a new sequence 
$(i_1,\ldots,i_{|I|},i'_1,\ldots,i'_{|I'|})$
 which we denote by $I+I'$.  If $A,B\in \mathcal{I}$, we take $A+B 
:=\check{A}+\check{B}$.  For a merged
 sequence, the notation ``$I\subset I+I'$'' always refers to the collection of 
the labels of the first $|I|$ members and
 analogously ``$I'\subset I+I'$'' refers to the collection of the last $|I'|$ 
members.  The merge operation is clearly associative, and
 we hence drop parentheses when it is applied iteratively; for instance, 
$I+I'+I''$ is a sequence of length $|I|+|I'|+|I''|$.
\item To avoid separate treatment of expressions involving empty sets and conditions, we 
employ here the following standard conventions:
if the condition $P$ is false, we define
\begin{align}\label{eq:emptyconv}
 \sum_{P} \left(\,\cdots\right) := 0,\quad
 \prod_{P} \left(\,\cdots\right) := 1,\quad\text{and set also}\quad
 \pazocal{P}(\emptyset) := \{\emptyset\}\, .
\end{align}
\end{enumerate}

Similarly to the moments, to any $I\in \mathcal{I}$ we denote the corresponding 
cumulant by one of the following alternative notations
\begin{align}
 & \kappa[y_I] = \kappa_\mu[y_I] = \E[y_{i_1}; y_{i_2}; \cdots; y_{i_n} ] =
 \kappa(y_{i_1}, y_{i_2}, \cdots, y_{i_n})\, .
\end{align}
The corresponding Wick polynomial is denoted by
\begin{align}
 & \w{y_{i_1} y_{i_2} \cdots y_{i_n}}  = \w{y^I} = \w{y^I}_\mu \, .
\end{align}
Note that this notation is slightly formal, since the result is not a function 
only of the power $y^I$ but depends on all subpowers,
$y^A$, $A\subset I$, as well.  It also requires that one carefully defines which 
random variables are being ``Wick contracted''.
We will use parentheses for this purpose, if necessary.  For instance, 
``$\w{\,(y^I)}$'' means $y^I-\E[y^I]$ which usually differs from $\w{y^I}$.

As an application of the above definitions, let us point out that the earlier 
defining Wick polynomial condition (\ref{eq:Appell}) is equivalent to the
requirement that for every nonempty sequence $I$ and any $j\in J$ we should have
\begin{align}\label{eq:Appell2}
 \partial_{y_j} \w{y^I} = \sum_{k=1}^{|I|} \cf(i_k=j) \w{y^{\cancelm{k}{I}}}\, .
\end{align}
Here, and in the following, $\cf$
denotes the generic characteristic function: $\cf(P)=1$ if the condition $P$ is
true, and otherwise $\cf(P)=0$.

We recall that, if the random variables $y_j$, $j=1,2,\ldots,n$, have joint exponential moments, 
then moments, 
cumulants and Wick polynomials can be generated by differentiation of their 
respective generating functions which are
\begin{align}\label{eq:defallgenf}
\Mgen(\lambda) := \E[\rme^{\lambda\cdot x}]\, , \quad
\Cgen(\lambda) := \ln \Mgen(\lambda) \qand
\Wgen(\lambda; y) := \frac{\rme^{\lambda\cdot y}}{\E[\rme^{\lambda\cdot x}]} =
\rme^{\lambda\cdot y-\Cgen(\lambda)}
\, .
\end{align}
Here $\lambda\cdot x:=\sum_{i=1}^n \lambda_i x_i$ (for the sake of clarity we 
have denoted the integrated random variable by ``$x$'' instead of ``$y$'') and 
the ``generation'' happens by evaluation of the $I$:th derivative at zero, 
i.e., 
\begin{align}
\E[y^I] = \partial^I_\lambda\Mgen(0)\, , \quad
\kappa[y_I]= \partial^I_\lambda\Cgen(0) \qand
\w{y^I} = \partial^I_\lambda\Wgen(0;y)
\, ,
\end{align}
where ``$\partial^I_\lambda$'' is a shorthand notation for 
$\partial_{\lambda_{i_1}}\partial_{\lambda_{i_2}}\cdots 
\partial_{\lambda_{i_n}}$.

As a side remark, let us also recall that it is possible to replace the above definitions of 
generating functions by parametrizations which
do not require the existence of any moments and hence work for arbitrary Borel 
probability measures $\mu$.
If all $y_j$ are real, then
replacing the exponential $\rme^{\lambda\cdot y}$ by $\rme^{\ci \lambda\cdot y}$ 
yields an $L^1(\mu)$ function for all 
$\lambda\in \R^n$.  If $y\in \C^n$, the same is achieved by using $\rme^{\ci 
(\lambda^*\cdot y+\lambda\cdot y^*)/2}$
and $\lambda\in \C^n$: in this case, differentiation with respect to $\re 
\lambda_j$ generates ``$\ci \,\re y_j$'' and with
respect to $\im \lambda_j$ generates ``$\ci \,\im y_j$''.  
Naturally, without absolute integrability of the moments it is not 
guaranteed that any of the derivative exist.  
However, it might nevertheless be useful to inspect the time evolution of the 
generating function, in particular, 
if the time evolution is regularizing and improves the integrability of the 
moments.

\section{Combinatorial definition and properties of the Wick polynomials}
\label{sec:Wick}

Let us first recall the ``moments-to-cumulants'' formula which holds for any 
$I\in \mathcal{I}$ as long as all moments $y^A$, $A\subset I$, belong to $L^1(\mu)$:
\begin{align}\label{eq:mtc}
 \mathbb{E}[y^I] = \sum_{\pi \in\pazocal{P}(I)} \prod_{A\in \pi} \kappa[y_A]\, ,
\end{align}
where $\pazocal{P}(I)$ denotes the collection of partitions of the set $I$. (Or 
to be precise, of $\tilde{I}$.  Here it is
important to assign a distinct label to each random variable in the power $y^I$ 
to get the combinatorics correctly.)
For a partition $\pi  \in\pazocal{P}(I)$, let us call the subsets $A\in \pi$ 
\defem{clusters} or \defem{blocks}.
Let us also recall that the cumulants are multilinear, i.e., they are 
separately 
linear in each of the variables $y_j$, $j\in I$.  
These results are discussed, for instance, in  \cite{Brillinger1975} and \cite{Shiryaev1995} and also 
briefly in Appendix \ref{app:cumulants} here.

Let us point out that by the conventions adopted here, (\ref{eq:mtc}) is indeed valid also for the empty
sequence $I=\emptyset$.  Then the sum over partitions is not empty since it contains $\pi=\emptyset$.  However, the corresponding 
product is empty since there is no $A$ with $A\in \pi$.  Therefore, the right hand side of (\ref{eq:mtc}) evaluates to one which agrees with
our definition of $\mathbb{E}[y^\emptyset]$.

We next show that it is possible to choose a subset of the indices and remove 
all its ``internal clusters''
from the moments-to-cumulants formula by replacing the corresponding power with 
a polynomial of the same order.
This will be achieved by using the following recursive definition.
\begin{definition}\label{comb_definition}
 Suppose that $I_0\in \mathcal{I}$ is such that $\E[|y^I|]<\infty$ for all 
$I\subset I_0$.  We define polynomials
 $\pazocal{W}[y^I] := \sum_{E\subset I} c_E[y^I]\, y^E$ for $I\subset I_0$ 
inductively in $|I|$ using the following rule:
 set $\pazocal{W}[y^{\emptyset}]:=1$, and for $I\ne \emptyset$ use
\begin{align}\label{eq:defTJ}
  \pazocal{W}[y^{I}] := y^I - \sum_{\emptyset\ne E\subset I} \E[y^E]\,
  \pazocal{W}[y^{I\setminus E}]\, .
\end{align}
\end{definition}
The definition makes sense since 
the $\pazocal{W}$-terms on the right hand side all have an order lower than 
$|I|$. It also implies that indeed each $\pazocal{W}[y^I]$
is a polynomial of order $|I|$, with only the term $y^I$ being of the highest 
order.
It is also straightforward to prove by induction that the coefficients 
$c_E[y^I]$ can be chosen so that they only depend on $\E[y^A]$ with $A\subset 
I$.
Therefore, the definition of $\pazocal{W}[y^{I}]$ is independent of $I_0$ in the 
following sense: if $I_0,I'_0\in 
\mathcal{I}$ are such that $\E[|y^I|]<\infty$ for every $I\subset I_0$ and 
$I\subset I'_0$, 
then for all $I\subset I_0\cap I'_0$ we have $\pazocal{W}[y^I;I_0] = 
\pazocal{W}[y^I;I'_0]$.
In Appendix \ref{app:cumulants} we explain how cumulants can also be defined 
via a similar recursive construction.

The following theorem shows that these polynomials indeed have the promised 
truncated
moments-to-cumulants expansion. We also see that the polynomials are essentially 
uniquely defined by this property.
What is perhaps surprising is that the coefficients of the polynomial can be 
chosen 
depending  only on the moments of its constituent random variables.  This implies 
that the same polynomial can be used for many different probability distributions, as long as the
marginal distributions for the constituent random variables are the same.
\begin{theorem}\label{th:W_prop}
Assume that the measure $\mu$ has all moments of order $N$, i.e.,  suppose that 
$\E[|y^I|]<\infty$ for all $I\in \mathcal{I}$ with 
$|I|\le N$.  
Use Definition \ref{comb_definition} to define $\pazocal{W}[y^I]$ for every such 
$I$.

Then replacing $y^I$ by $\pazocal{W}[y^I]$ removes all terms with clusters 
internal to $I$:
the following truncated moments-to-cumulants formula holds for every 
$I'\in \mathcal{I}$ with $|I'|+|I|\le N$
\begin{align}\label{eq:maintJ}
 \mathbb{E}\!\left[\pazocal{W}[y^I]y^{I'} \right] = \sum_{\pi 
\in\pazocal{P}(I+I')} 
 \cf(A\cap I' \ne \emptyset\text{ for all }A\in\pi) \prod_{A\in \pi} 
\kappa[y_A]\, .
\end{align}
In particular, $\mathbb{E}\!\left[\pazocal{W}[y^I] \right]=0$ if $I\ne 
\emptyset$.  

In addition, if $I\in \mathcal{I}$ with $|I|\le N/2$ and 
$\pazocal{W}'$ is a polynomial of order at most $|I|$
such that (\ref{eq:maintJ}) holds 
for all $I'$ with $|I'|\le N-|I|$, then $\pazocal{W}'$
is $\mu$-almost surely equal to $\pazocal{W}[y^I]$.
\end{theorem}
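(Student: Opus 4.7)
The plan is to prove (\ref{eq:maintJ}) by induction on $|I|$, and then derive the uniqueness statement via a short $L^2$-type argument that exploits the standing closure of $(y_j)_{j\in J}$ under complex conjugation. The base case $I=\emptyset$ is immediate: $\pazocal{W}[y^\emptyset]=1$ and the indicator in (\ref{eq:maintJ}) is automatic since every block of a partition of $I'$ is a subset of $I'$, so the claim reduces to (\ref{eq:mtc}). The ``in particular'' clause then follows by setting $I'=\emptyset$ in (\ref{eq:maintJ}): for $I\ne\emptyset$ no nonempty block can meet $\emptyset$, so the right-hand side vanishes.

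For the inductive step I would substitute the recursion of Definition \ref{comb_definition} into $\E[\pazocal{W}[y^I]y^{I'}]$ to get
\begin{align*}
\E\!\left[\pazocal{W}[y^I]y^{I'}\right] = \E[y^{I+I'}] - \sum_{\emptyset\ne E\subset I}\E[y^E]\,\E\!\left[\pazocal{W}[y^{I\setminus E}]y^{I'}\right],
\end{align*}
and then expand the first term and the factor $\E[y^E]$ via (\ref{eq:mtc}) and use the induction hypothesis for $\E[\pazocal{W}[y^{I\setminus E}]y^{I'}]$, which is applicable because $|I\setminus E|<|I|$ and $|I\setminus E|+|I'|\le N$. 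The combinatorial heart of the argument is the bijection between partitions $\pi\in\pazocal{P}(I+I')$ that contain at least one block $A\subset I$ and triples $(E,\pi_1,\pi_2)$ with $\emptyset\ne E\subset I$, $\pi_1\in\pazocal{P}(E)$, and $\pi_2\in\pazocal{P}((I\setminus E)+I')$ whose blocks all meet $I'$; the set $E$ is reconstructed from $\pi$ as the union of those blocks of $\pi$ that lie entirely in $I$. Under this correspondence the subtracted double sum cancels exactly the $I$-internal contributions in the full partition expansion of $\E[y^{I+I'}]$, leaving precisely the right-hand side of (\ref{eq:maintJ}).

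For uniqueness, set $D:=\pazocal{W}'-\pazocal{W}[y^I]$, a polynomial in the $y_j$'s of degree at most $|I|$. By the hypothesis on $\pazocal{W}'$ together with the first half of the theorem just proved, $\E[D\,y^{I'}]=0$ for every $I'\in\mathcal{I}$ with $|I'|\le N-|I|$, and by multilinearity this extends to $\E[D\cdot Q]=0$ for every polynomial $Q$ in the $y_j$'s of degree at most $N-|I|$. Since $|I|\le N/2$, the polynomial $\bar D$ has degree at most $|I|\le N-|I|$ and, by the closure of the collection under conjugation, is again a polynomial in the $y_j$'s; hence $Q=\bar D$ is an admissible test and yields $\E[|D|^2]=0$, so that $D=0$ $\mu$-almost surely. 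This is exactly where the bound $|I|\le N/2$ enters.

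The main obstacle I anticipate is the partition bookkeeping in the inductive step: one must work consistently in the labelled-set formalism of $\mathcal{I}$ so that indices from $I$ and $I'$ remain distinguishable in the merged sequence $I+I'$, and one must verify that the boundary conventions in (\ref{eq:emptyconv}), in particular $\pazocal{P}(\emptyset)=\{\emptyset\}$, correctly handle the extreme case $E=I$ (where $\pazocal{W}[y^{I\setminus E}]=\pazocal{W}[y^\emptyset]=1$). Once the bijection is set up, the remaining steps are a direct algebraic rearrangement of sums and products over partitions.
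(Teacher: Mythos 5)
Your proposal is correct and follows essentially the same route as the paper: induction on $|I|$ via the recursion of Definition \ref{comb_definition}, with the same decomposition of a partition $\pi\in\pazocal{P}(I+I')$ into the blocks internal to $I$ (whose union gives $E$) and the blocks meeting $I'$, and the same conjugation-based $L^2$ argument for uniqueness. The only cosmetic difference is that you read the cancellation starting from the definition of $\pazocal{W}[y^I]$ while the paper starts from $\E[y^{I+I'}]$ and reassembles the definition at the end; these are the same computation.
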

\begin{corollary}
Assume that $\E[|y^I|]<\infty$ for all $I\in \mathcal{I}$.
Then $\pazocal{W}[y^I]$ are $\mu$-almost surely
unique polynomials of order $|I|$ such that (\ref{eq:maintJ}) holds for every 
$I'\in \mathcal{I}$.
\end{corollary}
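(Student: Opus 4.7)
The plan is to derive the corollary as a direct consequence of Theorem \ref{th:W_prop} by letting the moment-order parameter $N$ tend to infinity. Since $\mu$ is assumed to have all joint moments, Definition \ref{comb_definition} is unambiguous and produces a polynomial $\pazocal{W}[y^I]$ for every $I\in\mathcal{I}$, and all finite-$N$ restrictions appearing in the theorem can be removed by choosing $N$ sufficiently large for the quantities at hand.

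For the existence half, I would fix an arbitrary $I\in\mathcal{I}$ together with any auxiliary sequence $I'\in\mathcal{I}$, and set $N := |I|+|I'|$. Because all moments exist, the hypothesis of the first claim of Theorem \ref{th:W_prop} holds with this $N$, so its conclusion yields (\ref{eq:maintJ}) for the pair $(I,I')$. Since $I'$ was arbitrary, (\ref{eq:maintJ}) extends to every $I'\in\mathcal{I}$, as required.

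For the uniqueness half, suppose $\pazocal{W}'$ is any polynomial in the variables $(y_j)_{j\in J}$ of total order at most $|I|$ such that (\ref{eq:maintJ}) holds for every $I'\in\mathcal{I}$ with $\pazocal{W}[y^I]$ replaced by $\pazocal{W}'$. Pick any $N\ge 2|I|$; then $|I|\le N/2$, and in particular (\ref{eq:maintJ}) holds for every $I'$ with $|I'|\le N-|I|$. This is exactly the hypothesis of the uniqueness clause of Theorem \ref{th:W_prop}, whose conclusion gives $\pazocal{W}'=\pazocal{W}[y^I]$ $\mu$-almost surely.

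There is essentially no technical obstacle; the only point worth flagging is that the equality obtained is only in the $\mu$-a.s.\ sense, not algebraic. If the support of $\mu$ is small, two algebraically different polynomials may coincide $\mu$-almost everywhere, and this is why the corollary phrases the uniqueness statement in the a.s.\ sense. No additional argument beyond the two invocations of Theorem \ref{th:W_prop} above is needed.
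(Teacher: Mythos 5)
Your proposal is correct and matches the paper's intent exactly: the paper simply declares the Corollary "an immediate consequence" of Theorem \ref{th:W_prop}, and your argument spells out the obvious passage to arbitrary $N$ (choosing $N=|I|+|I'|$ for existence and $N\ge 2|I|$ for uniqueness) that the paper leaves implicit. Nothing further is needed.
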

\begin{proof}  We make an induction in $|I|$.
By (\ref{eq:mtc}), the claim is true for $|I|=0$ since then $I=\emptyset$ and 
thus
$\pazocal{W}[y^I]=1$.  
 
Assume then that $I\ne \emptyset$ and that the claim is true for sets of size 
less than $|I|$.  Consider an arbitrary 
$I'\in \mathcal{I}$ such that $|I'|+|I|\le N$.
For $E\subset I$, denote $E^c:=(I+ I')\setminus E$.
Given a partition $ \pi$ of $I+I' $, we can define
$$
\pi_1 :=\left\lbrace A_1 \in \pi \, | \, A_1 \cap I' \neq \emptyset
 \right\rbrace 
$$
and
$$ 
\pi_0 := \pi \backslash \pi_1 \, .
$$
Then $E:=\cup \pi_0 \subset I$ and $\pi_0\in \pazocal{P}(E)$, $\pi_1\in 
\pazocal{P}(E^c)$ (also whenever $E$ or $E^c$ happens to be empty).
Once $ \pi$ is fixed, the decomposition $\pi = \pi_0 \cup \pi_1 $ is unique and 
we thus find that $1=\sum_{E\subset I} \sum_{\pi_0 \in\pazocal{P}(E)}
\sum_{\pi_1 \in\pazocal{P}(E^c)} \cf(\pi=\pi_0\cup \pi_1) \prod_{A_1\in 
\pi_1}\cf(A_1\cap I' \ne \emptyset)$.
Using this in the standard moments-to-cumulants formula shows that
\begin{align*}
& \mathbb{E}\!\left[y^I y^{I'} \right] = \mathbb{E}\!\left[y^{I+I'} \right] =
 \sum_{\pi \in\pazocal{P}(I+I')} \prod_{A\in \pi} \kappa[y_A]
 \nonumber \\
& \quad = 
\sum_{\pi \in\pazocal{P}(I+I')} \sum_{E\subset I} \sum_{\pi_0 \in\pazocal{P}(E)}
\sum_{\pi_1 \in\pazocal{P}(E^c)}
\cf(\pi=\pi_0\cup \pi_1) 
\prod_{A_0\in \pi_0} \kappa[y_{A_0}] \prod_{A_1\in \pi_1} 
\left(\kappa[y_{A_1}]\cf(A_1\cap I' \ne \emptyset)\right)  \nonumber  \\
& \quad = \sum_{E\subset I} \sum_{\pi_0 \in\pazocal{P}(E)}
\sum_{\pi_1 \in\pazocal{P}(E^c)}
\prod_{A_0\in \pi_0} \kappa[y_{A_0}] \prod_{A_1\in \pi_1} 
\left(\kappa[y_{A_1}]\cf(A_1\cap I' \ne \emptyset)\right)  \nonumber  \\
& \quad = \sum_{\pi \in\pazocal{P}(I+I')}\cf(A\cap I' \ne \emptyset\ \forall 
A\in\pi)\prod_{A\in \pi} \kappa[y_{A}] \nonumber \\
& \qquad + 
\sum_{\emptyset \neq E\subset I}\sum_{\pi_0 \in\pazocal{P}(E)} \prod_{A_0\in 
\pi_0}\kappa[y_{A_0}]
\sum_{\pi_1 \in\pazocal{P}(E^c)} \cf(A\cap I' \ne \emptyset\ \forall A\in\pi_1) 
\prod_{A_1\in \pi_1}\kappa[y_{A_1}]
 \nonumber \\
& \quad =\sum_{\pi \in\pazocal{P}(I+I')}\cf(A\cap I' \ne \emptyset\ \forall 
A\in\pi)\prod_{A\in \pi} \kappa[y_{A}] +\sum_{\emptyset\ne E\subset I}\E[y^E]\,
\mathbb{E}\!\left[\pazocal{W}[y^{I\setminus E}]y^{I'} \right] \, . 
\end{align*}
where in the last step we used the moments-to-cumulants formula and the 
induction hypothesis (note that $E^c$ collapses to the sequence
$(I\setminus E)+I'$).
Hence, by the definition (\ref{eq:defTJ}) equation (\ref{eq:maintJ}) holds for 
this $I$.    
This completes the induction step and shows that (\ref{eq:maintJ}) is valid for 
all $I,I'$ with $|I|+|I'|\le N$. 
If $I\ne \emptyset$ and $I'=\emptyset$, we have $I+I'\ne \emptyset$ so that
for any $\pi \in\pazocal{P}(I+I')$ there is some $A\in \pi$ and then obviously
$A\cap I' = \emptyset$. Thus (\ref{eq:maintJ}) implies that
$\mathbb{E}\!\left[\pazocal{W}[y^I] \right]=0$ for $I\ne \emptyset$.

To prove uniqueness, suppose that $I\in \mathcal{I}$ with $|I|\le N/2$ and 
$\pazocal{W}'$ is a polynomial of order at most $|I|$ such that (\ref{eq:maintJ}) holds 
for all $I'$ with $|I'|\le N-|I|$.
Then $P_I := \pazocal{W}'-\pazocal{W}[y^{I}]$ is a polynomial of order 
at most $|I|$ and $\mathbb{E}[P_I y^{I'}] = 0$ for all $I'$ with $|I'|\le N/2$.  Since the collection of random 
variables is assumed to be closed under complex
conjugation, this implies that also $\mathbb{E}[P_I (y^{I'})^* ] = 
0$ whenever $|I'|\le N/2$.
Thus we can take a linear combination of such equations and 
conclude that 
$\mathbb{E}\!\left[|P_I|^2 \right] = 0$.  This implies that $P_I=0$ almost 
surely, i.e., that 
$\pazocal{W}'=\pazocal{W}[y^{I}]$ almost surely.

This concludes the proof of the Theorem.  The Corollary is then an immediate consequence.
\end{proof}

In fact, the polynomials given by Definition \ref{comb_definition} are equal 
to the standard Wick polynomials.
\begin{proposition}\label{th:WisWick}
Suppose that $I_0\in \mathcal{I}$ is such that $\E[|y^I|]<\infty$ for all 
$I\subset I_0$.
Then $\pazocal{W}[y^I]=\w{y^I}$ for every $I\subset I_0$.
\end{proposition}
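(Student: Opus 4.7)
The plan is to verify that the polynomials $\pazocal{W}[y^I]$ from Definition \ref{comb_definition} satisfy the three defining conditions (1)--(3) of the standard Wick polynomial $\w{y^I}$ listed in the introduction, and then invoke the algebraic uniqueness noted right after (\ref{eq:Appell}). Condition (1), $\pazocal{W}[y^\emptyset]=1$, is immediate from Definition \ref{comb_definition}, and condition (2), $\E[\pazocal{W}[y^I]]=0$ for $I\ne\emptyset$, is an explicit part of the statement of Theorem \ref{th:W_prop}. The substantive step is condition (3), the Appell-type derivative identity, which in the sequence notation of (\ref{eq:Appell2}) reads $\partial_{y_j}\pazocal{W}[y^I]=\sum_{k=1}^{|I|}\cf(i_k=j)\,\pazocal{W}[y^{\cancelm{k}{I}}]$.

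I would establish this identity by induction on $|I|$. The base case $I=\emptyset$ is trivial since both sides vanish. For the inductive step, apply the formal derivative $\partial_{y_j}$ to the recursive formula (\ref{eq:defTJ}) for $\pazocal{W}[y^I]$: the contribution $\partial_{y_j} y^I$ produces $\sum_k \cf(i_k=j)\,y^{\cancelm{k}{I}}$, and by the induction hypothesis each term $\partial_{y_j}\pazocal{W}[y^{I\setminus E}]$ expands into a sum over those labels $(k,i_k)\in I\setminus E$ with $i_k=j$, multiplied by $\pazocal{W}[y^{(I\setminus E)\setminus\{(k,i_k)\}}]$. On the other hand, one expands the desired right-hand side $\sum_k\cf(i_k=j)\,\pazocal{W}[y^{\cancelm{k}{I}}]$ one more time via (\ref{eq:defTJ}) applied to each $\cancelm{k}{I}$; this reproduces the same $\sum_k\cf(i_k=j)\,y^{\cancelm{k}{I}}$ leading piece, together with a double sum over pairs $((k,i_k),E)$ with $i_k=j$, $E\ne\emptyset$ and $(k,i_k)\notin E$, weighted by $\E[y^E]\,\pazocal{W}[y^{I\setminus E\setminus\{(k,i_k)\}}]$. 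The double sum from the differentiated recursion ranges over exactly the same set of pairs with the same summand after swapping the $E$- and $(k,i_k)$-sums, which closes the induction.

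With conditions (1)--(3) verified, the algebraic uniqueness recalled in the parenthetical after (\ref{eq:Appell}) forces $\pazocal{W}[y^I]=\w{y^I}$ as polynomials in the formal variables $y_j$, yielding the proposition. The only real obstacle is clerical: one must set up the labelled-set bookkeeping carefully so that the ``differentiate then subtract'' expansion on one side and the ``subtract then differentiate'' expansion on the other side are manifestly indexed by the same pairs $((k,i_k),E)$ with matching summands. No new probabilistic content beyond Theorem \ref{th:W_prop} and the recursion (\ref{eq:defTJ}) itself enters the argument; an alternative route, via the uniqueness part of Theorem \ref{th:W_prop} applied to $\pazocal{W}'=\w{y^I}$ once one checks (\ref{eq:maintJ}) for the standard Wick polynomial, would also work but requires essentially the same combinatorial identity in disguise.
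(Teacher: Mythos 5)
Your proposal is correct and follows essentially the same route as the paper: the paper also gets condition (2) from Theorem \ref{th:W_prop}, proves the Appell derivative identity (\ref{eq:Appell2}) for $\pazocal{W}$ by induction on $|I|$ by differentiating the recursion (\ref{eq:defTJ}) and reassembling the resulting double sum over pairs $((k,i_k),E)$ with $(k,i_k)\notin E$ into the recursion for $\pazocal{W}[y^{\cancelm{k}{I}}]$, and then concludes by the uniqueness of the defining conditions. The bookkeeping you flag as the only real obstacle is handled in the paper exactly as you describe, via the characteristic function $\cf((k,i_k)\notin E)$ in equation (\ref{eq:WyIeqs}).
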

\begin{proof}
If $I=\emptyset$, we have $\pazocal{W}[y^I]=1=\w{y^I}$, and else by Theorem 
\ref{th:W_prop} we have $\E[\pazocal{W}[y^I]]=0$.
Therefore, to prove $\pazocal{W}[y^I]=\w{y^I}$ it suffices to check that 
(\ref{eq:Appell2}) holds when the Wick polynomials are replaced by 
$\pazocal{W}$-polynomials.
We do this by induction over $|I|$.  Firstly, if $I=\emptyset$, we have 
$\partial_{y_j} \pazocal{W}[y^I]=0$, as required.
Assume then that $I\ne \emptyset$ and that the claim is true for sets of size 
less than $|I|$.
For every $j\in J$ and nonempty $E\subset I$, the induction assumption implies 
that
\begin{align}\label{eq:WyIeqs}
 \partial_{y_j} \pazocal{W}[y^{I\setminus E}] =
 \sum_{k=1}^{|I|} \cf(i_k=j)\, \cf((k,i_k)\not\in E)\, 
\pazocal{W}[y^{\cancelm{k}{I}\setminus E}]\, .
\end{align}
(The second characteristic function allows keeping the labeling inherited from 
$I$ by adding zero terms into the sum for the ``missing'' labels.)
Since $\pazocal{W}[y^I]$ satisfies (\ref{eq:defTJ}), we thus find that its 
algebraic derivatives satisfy an equality
\begin{align}
  & \partial_{y_j} \pazocal{W}[y^{I}] =
 \sum_{k=1}^{|I|} \cf(i_k=j)\biggl[ y^{\cancelm{k}{I}} -
 \sum_{\emptyset\ne E\subset I} \E[y^E]
 \cf((k,i_k)\not\in E)\, \pazocal{W}[y^{\cancelm{k}{I}\setminus E}] \biggr]
 \nonumber \\ & \quad
 =
 \sum_{k=1}^{|I|} \cf(i_k=j)\biggl[ y^{I'} -
 \sum_{\emptyset\ne E\subset I'} \E[y^E] \pazocal{W}[y^{I'\setminus E}] 
\biggr]_{I'=\cancelm{k}{I}}
 \nonumber \\ & \quad
 =
 \sum_{k=1}^{|I|} \cf(i_k=j) \pazocal{W}[y^{\cancelm{k}{I}}] 
 \, ,
\end{align}
where in the last equality we have applied (\ref{eq:defTJ}).  This proves that 
also $\pazocal{W}[y^{I}]$ satisfies (\ref{eq:Appell2})
and hence completes the induction step.
Therefore, the polynomials $\pazocal{W}[y^I]$ satisfy the defining properties of 
Wick polynomials and thus $\pazocal{W}[y^I]=\w{y^I}$.
\end{proof}

\subsection{Basic properties of the Wick polynomials}
\label{sec:Wickprop}

In this section, we assume that there is $I_0\in \mathcal{I}$ is such that
$\E[|y^I|]<\infty$ for all $I\subset I_0$.
This guarantees the existence of the Wick polynomials $\pazocal{W}[y^I]$ for all $I\subset I_0$, 
and allows using the results from the previous section.  In particular, 
by Theorem \ref{th:WisWick} these are equal to the standard Wick polynomials and from now on we will use the 
standard notation $\w{y^I}$ for them.

The next Proposition collects some of the most important properties of Wick polynomials.
\begin{proposition}
The following statements hold for any $I\subset I_0$:
\begin{enumerate}
\item
\begin{align}
{y^I} = \sum_{U \subset I} \w{y^U} \E[y^{I \backslash U}] =
\sum_{U \subset I} \w{y^U} \sum_{\pi \in \pazocal{P}(I \backslash U)} \prod_{A 
\in \pi} \kappa[y_A] \, .
\label{inv_wick}
\end{align}
\item Wick polynomials are permutation invariant: if $I'$ is a permutation of 
$I$, then $\w{y^{I'}}=\w{y^I}$.
\item
\begin{align}
\w{y^I} = \sum_{U \subset I} y^U \sum_{\pi \in \pazocal{P}(I \backslash 
U)}(-1)^{|\pi|} \prod_{A \in \pi} \kappa[y_A] \, .
\label{wick_cum}
\end{align}
\item If $I':=\cancelm{1}{I}$ denotes the sequence obtained by cancelling the 
first element of $I$, 
\begin{equation}\label{recursion_prop}
\w{y^I}= y_{i_1} \w{y^{I'}}\ - \sum_{(1,i_1)\in V \subset I} \kappa[y_V] \w{y^{I 
\backslash V}}
= y_{i_1} \w{y^{I'}}\ - \sum_{U \subset I'} \kappa[y_{(i_1)+U}] \w{y^{I' 
\backslash U}}\, .
\end{equation}
\end{enumerate}
\end{proposition}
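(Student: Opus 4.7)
The four statements are proved in the order 1, 3, 2, 4; after item 3 the remaining two are direct consequences.

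For item 1, I would rearrange the defining recursion (\ref{eq:defTJ}). Moving $\w{y^I}$ to the left and incorporating the $E=\emptyset$ term via $\E[y^\emptyset]=1$ and $\w{y^{I\setminus\emptyset}}=\w{y^I}$ gives
\begin{align*}
y^I = \sum_{E\subset I}\E[y^E]\,\w{y^{I\setminus E}}.
\end{align*}
Reindexing $U:=I\setminus E$ yields the first equality of (\ref{inv_wick}); the second is obtained by applying the moments-to-cumulants formula (\ref{eq:mtc}) to each factor $\E[y^{I\setminus U}]$.

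For item 3, I would invert item 1 on the Boolean lattice of subsets of $I_0$. Define $\nu(E):=\sum_{\pi\in\pazocal{P}(E)}(-1)^{|\pi|}\prod_{A\in\pi}\kappa[y_A]$ with $\nu(\emptyset):=1$. The key combinatorial identity is
\begin{align*}
\sum_{U\subset E}\nu(U)\,\E[y^{E\setminus U}] = \cf(E=\emptyset),
\end{align*}
which I would prove by substituting (\ref{eq:mtc}) for $\E[y^{E\setminus U}]$, regrouping the resulting double sum according to the partition $\pi\in\pazocal{P}(E)$ formed by merging the sub-partitions of $U$ and $E\setminus U$, and observing that the signed count of ways to mark a subset of the blocks of $\pi$ equals $(1-1)^{|\pi|}$, which vanishes unless $\pi=\emptyset$. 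Convolving this inverse against item 1 then produces exactly (\ref{wick_cum}). This Möbius-type step is the main combinatorial obstacle, as the interchange of subset and partition sums must be done carefully for the signs to assemble into $(1-1)^{|\pi|}$ as claimed.

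Items 2 and 4 then follow from (\ref{wick_cum}). For item 2, the representation $\w{y^I}=\sum_{U\subset I}y^U\nu(I\setminus U)$ depends only on the multiset of variables involved, since both $y^U$ and the cumulants in $\nu(I\setminus U)$ are permutation-invariant, and any permutation of $I$ induces a label bijection on subsets of $\tilde I$ preserving each summand. For item 4, I would split the sum in (\ref{wick_cum}) according to whether the distinguished label $(1,i_1)$ lies in $U$ or not. The case $(1,i_1)\in U$, writing $U=\{(1,i_1)\}\cup U'$ with $U'\subset I'$, pulls out $y_{i_1}$ and leaves $y_{i_1}\sum_{U'\subset I'}y^{U'}\nu(I'\setminus U')=y_{i_1}\w{y^{I'}}$. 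In the case $(1,i_1)\notin U$, so $U\subset I'$, within each partition $\pi\in\pazocal{P}(I\setminus U)$ the unique block $V$ containing $(1,i_1)$ is isolated, contributing $(-1)\kappa[y_V]$; the remaining blocks form a partition of $(I\setminus V)\setminus U$, and summing over both $U$ and such reduced partitions yields $\w{y^{I\setminus V}}$ via (\ref{wick_cum}). The total contribution from this case is therefore $-\sum_{(1,i_1)\in V\subset I}\kappa[y_V]\w{y^{I\setminus V}}$, and the second form of (\ref{recursion_prop}) is the reparametrisation $V=\{(1,i_1)\}\cup U$ with $U\subset I'$.
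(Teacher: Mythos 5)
Your proposal is correct, and items 1, 2 and 4 run essentially parallel to the paper (item 1 is the same rearrangement of (\ref{eq:defTJ}); your split of (\ref{wick_cum}) according to whether $(1,i_1)\in U$ is exactly the computation the paper performs to obtain (\ref{recursion2})). The genuine difference is in item 3. The paper takes the right-hand side of (\ref{wick_cum}) as a candidate polynomial $\widetilde{\pazocal{W}}[y^I]$ and shows by induction that it satisfies the defining properties of Wick polynomials: the derivative identity (\ref{eq:Appell2}) and vanishing expectation, the latter obtained by taking expectations in the recursion (\ref{recursion2}) and invoking Theorem \ref{th:W_prop}. You instead invert item 1 directly on the Boolean lattice: with $\nu(E)=\sum_{\pi\in\pazocal{P}(E)}(-1)^{|\pi|}\prod_{A\in\pi}\kappa[y_A]$ the identity $\sum_{U\subset E}\nu(U)\,\E[y^{E\setminus U}]=\cf(E=\emptyset)$ follows from regrouping by the merged partition and the factor $(1-1)^{|\pi|}$, and convolving (more carefully: using that the system $y=\w{\,\cdot\,}*\E$ is triangular with unit diagonal, so $y*\nu$ is its unique solution) yields (\ref{wick_cum}). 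Your route is self-contained and purely combinatorial --- it needs neither the algebraic-derivative characterization nor Theorem \ref{th:W_prop} --- at the cost of the one convolution identity; the paper's route is longer but simultaneously re-verifies the Appell property (\ref{check:deriv_wick}) for the explicit cumulant representation, and it reverses the logical order by deriving item 3 from the recursion of item 4 rather than the other way around. Both are complete proofs.
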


\begin{proof}
{\it Item 1:} The first equality in (\ref{inv_wick}) follows directly from the 
definition (\ref{eq:defTJ}) since then
\begin{align}
{y^I} = \sum_{E\subset I} \E[y^E] \w{y^{I\setminus E}} =  \sum_{U\subset I} 
\E[y^{I\setminus U}] \w{y^U} \, .
\end{align}
The second equality follows then by using the moments-to-cumulants expansion.

{\it Item 2:}
The permutation invariance of the Wick polynomials follows using straightforward 
induction in the definition 
(\ref{eq:defTJ}) since the random variables commute and hence the powers $y^E$ 
are always permutation invariant.

{\it Items 3 and 4:}
Let us first define $ \widetilde{\pazocal{W}}[y^I]$ by setting it equal to the 
right hand side of (\ref{wick_cum})
for any $I\subset I_0$.  If $I=\emptyset$, we have $\widetilde{\pazocal{W}}[y^I] 
= 1 = \w{y^I}$.   Suppose $I \neq \emptyset$.
Since $\sum_{\pi \in \pazocal{P}(\emptyset)}(-1)^{|\pi|} \prod_{A \in \pi} \kappa[y_A]  =1$, 
the definition yields a polynomial of order $|I|$ in $y$.  Our goal is to prove 
that
\begin{align}
\label{check:deriv_wick}
 \partial_{y_j}\widetilde{\pazocal{W}}[y^I] & = \sum_{k=1}^{|I|} \cf(i_k=j)  
\widetilde{\pazocal{W}}\big[y^{\widehat{I}^{(k)}}\big]\, ,\\
\label{recursion2}
 \widetilde{\pazocal{W}}\big[y^I \big] 
& = y_{i_1} \widetilde{\pazocal{W}}\big[y^{\cancelm{1}{I}} \big] - \kappa[y_I] 
 - \sum_{(1,i_1)\in V \subsetneq I}  \kappa[y_V] 
\widetilde{\pazocal{W}}\big[y^{I \backslash V} \big] \,.
\end{align}
Then the claim $\widetilde{\pazocal{W}}[y^I]=\w{y^{I}}$ follows by 
straightforward induction in $|I|$:
Case $|I|=0$ was proven above.  Suppose $I \neq \emptyset$ and that 
$\widetilde{\pazocal{W}}[y^{I'}]=\w{y^{I'}}$ whenever $|I'|<|I|$.
Then the induction assumption and Theorem \ref{th:W_prop} can be used to 
evaluate the expectation of the right hand side of (\ref{recursion2}), 
implying $\E\big[\widetilde{\pazocal{W}}[y^I]\big]=0$.  By 
(\ref{check:deriv_wick}), the polynomial $\widetilde{\pazocal{W}}[y^I]$ also 
satisfies the third defining condition of the Wick polynomials, equation 
(\ref{eq:Appell2}).  Hence, $\widetilde{\pazocal{W}}[y^I]=\w{y^{I}}$ which completes 
the induction
step.  Then (\ref{recursion2}) implies the first identity in 
(\ref{recursion_prop}) and the second identity is found by a relabeling of the 
summation variable.  
Hence, also item 4 follows.

To prove (\ref{check:deriv_wick}), consider some $I \neq \emptyset$.  In the definition of 
$\widetilde{\pazocal{W}}[y^I]$, 
we can express the derivatives of $y^U$, $U\subset I$, as in (\ref{eq:WyIeqs}).  
This shows that 
\begin{align}\label{eq:deriv_wtilde}
& \partial_{y_j}\widetilde{\pazocal{W}}[y^I] = 
\sum_{U \subset I}
 \sum_{k=1}^{|I|} \cf(i_k=j)\, \cf((k,i_k)\in U) y^{\widehat{U}^{(k)}} \sum_{\pi 
\in \pazocal{P}(I \backslash U)}(-1)^{|\pi|} \prod_{A \in \pi} \kappa[y_A] 
\nonumber \\ 
& \quad =  \sum_{k=1}^{|I|} \cf(i_k=j)  \sum_{V \subset \cancelm{k}{I}} y^{V} 
\sum_{\pi \in \pazocal{P}(\widehat{I}^{(k)} \backslash V)}(-1)^{|\pi|} \prod_{A 
\in \pi} \kappa[y_A] \nonumber \\
& \quad = \sum_{k=1}^{|I|} \cf(i_k=j)  
\widetilde{\pazocal{W}}\big[y^{\widehat{I}^{(k)}}\big]\, .
\end{align}
Therefore, (\ref{check:deriv_wick}) holds.

To prove (\ref{recursion2}), denote $x:=(1,i_1)$ and $I':=\cancelm{1}{I}$.  We first split the 
definition into two parts as follows:
\begin{align}\label{eq:splitWatx}
& \widetilde{\pazocal{W}}\big[y^I \big] = \sum_{U \subset I} \cf(x \in U)y^U 
\sum_{\pi \in \pazocal{P}(I \backslash U)}(-1)^{|\pi|} \prod_{A \in \pi} 
\kappa[y_A]
%\nonumber \\ & \qquad
+ \sum_{U \subset I} \cf(x\not\in U)y^U \sum_{\pi \in \pazocal{P}(I \backslash 
U)}(-1)^{|\pi|} \prod_{A \in \pi} \kappa[y_A] \, .
\end{align} 
Following a reasoning similar to (\ref{eq:deriv_wtilde}), we find that the first 
term in the sum on the right hand side 
is equal to $y_{i_1} \widetilde{\pazocal{W}}\big[y^{I'} \big]$.
The second term is equal to 
\begin{align}%\label{recursion2}
 & \sum_{U \subset I} \cf(x \not\in U)y^U 
\sum_{\pi \in \pazocal{P}(I \backslash U)}(-1)^{|\pi|} \bigg(\prod_{x\not \in A 
\in \pi} \kappa[y_A] \bigg) \kappa[y_V] \big\vert_{x \in V \in \pi}
\nonumber \\ & \quad
= - \sum_{U \subset I} \cf(x\not\in U)y^U \sum_{x\in V\subset I\setminus U} 
\kappa[y_V]
\sum_{\pi \in \pazocal{P}((I \backslash U) \backslash V)}(-1)^{|\pi|} \prod_{A 
\in \pi } \kappa[y_A]
\nonumber \\ & \quad
= - \sum_{x\in V\subset I} \kappa[y_V] \sum_{U \subset I \backslash V} y^U
\sum_{\pi \in \pazocal{P}((I \backslash V )\backslash U)}(-1)^{|\pi|} \prod_{A 
\in \pi } \kappa[y_A]
\nonumber \\ & \quad
= - \sum_{x\in V\subset I}  \kappa[y_V] \widetilde{\pazocal{W}}\big[y^{I 
\backslash V} \big] 
 \,.
\end{align} 
Therefore, (\ref{eq:splitWatx}) implies that also (\ref{recursion2}) holds.  This completes the proof of the Theorem.
\end{proof}

\begin{example}
Written is terms of cumulants, the Wick polynomials of lowest order are
\begin{eqnarray}
 \w{y} & =& y - \kappa(y)\, , \nonumber \\
 \w{y_1y_2} &=& y_1 y_2-\kappa(y_1,y_2)-\kappa(y_1)y_2 
-\kappa(y_2)y_1+\kappa(y_1)\kappa(y_2)\, , \nonumber \\
 \w{y_1y_2y_3} &=& y_1 y_2 y_3-\kappa(y_1,y_2,y_3)
 +\kappa(y_1,y_2)\kappa(y_3) +\kappa(y_1,y_3)\kappa(y_2) 
+\kappa(y_2,y_3)\kappa(y_1)
 \nonumber \\ \nonumber
 && - \kappa(y_1)\kappa(y_2)\kappa(y_3)
- \kappa(y_1,y_2) y_3 -\kappa(y_1,y_3) y_2 -\kappa(y_2,y_3) y_1
 \nonumber \\ \nonumber
 &&
 + \kappa(y_1)\kappa(y_2) y_3 +\kappa(y_1)\kappa(y_3) y_2+\kappa(y_2)\kappa(y_3)  y_1
 \nonumber \\ \nonumber
 &&
- \kappa(y_1)y_2 y_3 - \kappa(y_2)y_1 y_3 - \kappa(y_3)y_1 y_2\, .
\end{eqnarray}
\end{example}

\begin{proposition}\label{eq:Wickmultlin}
The Wick polynomials are multilinear, i.e., if $\alpha, \beta$ are constants 
such that $y_j = \alpha y_i + \beta y_{i'}$ for some $j,i,i'\in J$,
then, whenever $I$ and $k$ are such that $i_k=j$, we have
$$
\w{y^{I}} = \alpha \w{y^{\cancelm{k}{I}+(i)}} + \beta 
\w{y^{\cancelm{k}{I}+(i')}} \, .
$$
\end{proposition}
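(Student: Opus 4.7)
The plan is to reduce the claim to the case $k=1$ by permutation invariance (item 2 of the preceding proposition) and then read off the identity by combining the recursion (\ref{recursion_prop}) with the multilinearity of cumulants recalled after equation (\ref{eq:mtc}).

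Concretely, by permutation invariance $\w{y^{\cancelm{k}{I}+(i)}}$ and $\w{y^{\cancelm{k}{I}+(i')}}$ coincide with $\w{y^{(i)+I'}}$ and $\w{y^{(i')+I'}}$ respectively, where $I' := \cancelm{k}{I}$, and similarly $\w{y^I}$ coincides with $\w{y^{(j)+I'}}$. Hence it suffices to verify the identity after placing the distinguished slot at the head of the sequence. Apply the second form of (\ref{recursion_prop}) to each of the three relevant Wick polynomials to obtain
\begin{align*}
 \w{y^{(j)+I'}}  &= y_{j}\,\w{y^{I'}} - \sum_{U\subset I'} \kappa[y_{(j)+U}]\,\w{y^{I'\setminus U}}, \\
 \w{y^{(i)+I'}}  &= y_{i}\,\w{y^{I'}} - \sum_{U\subset I'} \kappa[y_{(i)+U}]\,\w{y^{I'\setminus U}}, \\
 \w{y^{(i')+I'}} &= y_{i'}\,\w{y^{I'}} - \sum_{U\subset I'} \kappa[y_{(i')+U}]\,\w{y^{I'\setminus U}}.
\end{align*}

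Now take $\alpha$ times the second line plus $\beta$ times the third. The prefactor of $\w{y^{I'}}$ becomes $\alpha y_i+\beta y_{i'}=y_j$ by hypothesis, and for every $U\subset I'$ the multilinearity of cumulants gives $\alpha\kappa[y_{(i)+U}]+\beta\kappa[y_{(i')+U}]=\kappa[y_{(j)+U}]$, since the cumulant is linear in its first entry. The resulting expression is exactly the right hand side of the first line, so $\alpha\w{y^{(i)+I'}}+\beta\w{y^{(i')+I'}}=\w{y^{(j)+I'}}$, which establishes the proposition after undoing the permutation.

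There is no substantial obstacle here: once (\ref{recursion_prop}) and cumulant multilinearity are in hand the argument is a two-line linear algebra manipulation. The only minor care required is in the labelling bookkeeping, namely in checking that $\w{y^{\cancelm{k}{I}+(i)}}=\w{y^{(i)+I'}}$, which however follows immediately from permutation invariance since the two sequences agree up to reordering.
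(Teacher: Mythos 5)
Your argument is correct, and it reaches the conclusion by a slightly different route than the paper. The paper's proof is a one-liner: apply the multilinearity of cumulants directly inside the closed-form representation (\ref{wick_cum}), observing that in each term of that sum the distinguished label $(k,i_k)$ occurs exactly once, either as a factor of the monomial $y^U$ or as a single entry of a single cumulant factor $\kappa[y_A]$ with $A\in\pi$, so every term---hence the whole sum---is linear in that slot. You instead first invoke permutation invariance (item 2) to move the distinguished variable to the head of the sequence and then apply the one-step recursion (\ref{recursion_prop}) (item 4), which conveniently isolates all occurrences of that variable into the single prefactor $y_{i_1}$ and the first entry of the cumulants $\kappa[y_{(i_1)+U}]$, while the Wick factors $\w{y^{I'}}$ and $\w{y^{I'\setminus U}}$ are common to all three expansions; linearity is then immediate, with no induction needed since the recursion is applied only once. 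Both proofs ultimately rest on the same two ingredients (multilinearity of cumulants and a cumulant representation of $\w{y^I}$); yours trades the bookkeeping of tracking the distinguished label through all terms of (\ref{wick_cum}) for the permutation-invariance step, and is arguably easier to verify line by line, while the paper's is more compact. One small remark: since item 4 of the preceding proposition is itself derived from item 3, i.e.\ from (\ref{wick_cum}), the two arguments are logically very close, and neither is circular because both items are established before the multilinearity proposition.
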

\begin{proof}
The claim follows using multilinearity of cumulants in the representation 
formula (\ref{wick_cum}).
\end{proof}

The following result extends the earlier theorem and shows that multiple 
application of Wick polynomial replacements continues to simplify 
the moments-to-cumulants formula by removing all terms with any internal 
clusters.
\begin{proposition}\label{th:W_mult_prop}
Assume that the measure $\mu$ has all moments of order $N$, i.e.,  suppose that 
$\E[|y^I|]<\infty$ for all $I\in \mathcal{I}$ with 
$|I|\le N$.  Suppose $L\ge 1$ is given and
consider a collection of $L+1$ index sequences
$J',J_\ell\in\mathcal{I}$, $\ell=1,\ldots,L$, such that $|J'|+\sum_\ell 
|J_\ell|\le N$.
Then for $I:= \sum_{\ell=1}^L J_\ell + J'$ 
(with the implicit identification of $J_\ell$ and $J'$ with the set of its 
labels in $I$) we have 
\begin{equation}
\E \bigg[ \prod_{\ell=1}^L \w{y^{J_\ell} }  y^{J'}\bigg] = \sum_{\pi \in 
\pazocal{P}(I)}\prod_{A \in \pi}
\!\left(\kappa[y_A] \cf(A \not\subset J_\ell\  \forall \ell)\right) \, .
\label{wick_prod_multi}
\end{equation}
\end{proposition}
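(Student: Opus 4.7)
The plan is to derive the identity by expanding each Wick polynomial via (\ref{wick_cum}), taking the expectation via the moments-to-cumulants formula (\ref{eq:mtc}), and exhibiting a block-by-block sign cancellation on the partitions of $I$ that arise. An alternative would be induction on $L$ using Theorem \ref{th:W_prop} as the base case, but the direct combinatorial route is cleaner because all the combinatorics is transparent.

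First, I would apply (\ref{wick_cum}) to each factor, obtaining
$$ \w{y^{J_\ell}} = \sum_{U_\ell \subset J_\ell} y^{U_\ell} \sum_{\pi_\ell \in \pazocal{P}(J_\ell \setminus U_\ell)} (-1)^{|\pi_\ell|} \prod_{A \in \pi_\ell} \kappa[y_A]. $$
Multiplying the $L$ factors together with $y^{J'}$ produces a sum indexed by $(U_1,\ldots,U_L,\pi_1,\ldots,\pi_L)$ of terms proportional to $y^{U_1+\cdots+U_L+J'}$. Taking expectation of this monomial via (\ref{eq:mtc}) introduces a further partition $\sigma \in \pazocal{P}(U_1+\cdots+U_L+J')$, yielding
$$ \E\!\left[\prod_\ell \w{y^{J_\ell}} y^{J'}\right] = \sum_{(U_\ell),(\pi_\ell),\sigma} (-1)^{\sum_\ell |\pi_\ell|} \prod_{A \in \bigcup_\ell \pi_\ell} \kappa[y_A] \prod_{B \in \sigma} \kappa[y_B]. $$

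Next, I would reorganize this sum by the combined partition $\tau := (\bigcup_\ell \pi_\ell) \cup \sigma$, which is a partition of the full labelled set $I = \sum_\ell J_\ell + J'$. Since $J_1,\ldots,J_L,J'$ are pairwise disjoint as labelled subsets of $I$, each block $C \in \tau$ falls into exactly one of three cases: (a) $C \subset J_\ell$ for some unique $\ell \in \{1,\ldots,L\}$; (b) $C \subset J'$; or (c) $C$ meets at least two of $J_1,\ldots,J_L,J'$. Blocks of type (b) or (c) cannot lie in any $\pi_\ell$ (whose blocks must sit inside a single $J_\ell$), so they belong to $\sigma$, contributing uniquely with sign $+1$. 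A type-(a) block $C \subset J_\ell$, on the other hand, admits exactly two realizations: either $C \in \pi_\ell$ with $U_\ell \cap C = \emptyset$ and sign $-1$, or $C \in \sigma$ with $U_\ell \supset C$ and sign $+1$. The choice for each type-(a) block is independent of the others (the induced $U_\ell$'s are consistent in each alternative), so the two contributions cancel exactly.

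Consequently, only those partitions $\tau$ with no type-(a) block, i.e.\ no block entirely contained in any $J_\ell$ with $\ell \in \{1,\ldots,L\}$, survive, each contributing $\prod_{C \in \tau} \kappa[y_C]$. This is exactly the right-hand side of (\ref{wick_prod_multi}). The main obstacle is the bookkeeping in the second step: verifying that the map $(U_\ell,\pi_\ell,\sigma) \mapsto \tau$ factorizes into independent block-wise choices, so that the sign cancellation can be performed one block at a time. Once this factorization is made precise, the cancellation argument, and hence the proposition, is immediate.
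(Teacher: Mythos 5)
Your argument is correct, but it takes a genuinely different route from the paper. The paper proceeds by a double induction (outer induction on $L$, inner induction on $|J_L|$), peeling off one Wick factor at a time via the recursive definition (\ref{eq:defTJ}) and using Theorem \ref{th:W_prop} as the base case; the cancellation of internal clusters is then inherited step by step from the $L=1$ case. You instead expand every factor explicitly through (\ref{wick_cum}), apply the moments-to-cumulants formula to the resulting monomials, and regroup by the combined partition $\tau$ of $I$; the key observation is that each block internal to a single $J_\ell$ admits exactly two realizations, once inside $\pi_\ell$ with sign $-1$ and once inside $\sigma$ with sign $+1$, and that these choices factorize independently over blocks (the induced $U_\ell$ being determined as $J_\ell\cap\bigcup_{C\in\sigma}C$ in every case), so the contribution of any $\tau$ containing an internal block vanishes. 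I checked the factorization: given $\tau$, assigning each type-(a) block to either $\pi$ or $\sigma$ uniquely and consistently reconstructs a tuple $(U_\ell,\pi_\ell,\sigma)$, and blocks meeting $J'$ or two different index sequences are forced into $\sigma$ with sign $+1$, so the surviving partitions are exactly those in (\ref{wick_prod_multi}). There is no circularity, since (\ref{wick_cum}) is established in Section \ref{sec:Wickprop} independently of this proposition. Your approach buys a one-shot, sign-cancellation proof that makes the mechanism behind the removal of internal clusters transparent and in fact recovers Theorem \ref{th:W_prop} as the special case $L=1$; the paper's induction is longer but stays closer to the bare recursive definition and avoids invoking the explicit cumulant representation of the Wick polynomials.
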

\begin{proof}
We proceed via a double induction: the first induction is over $L$ and the 
second induction is over $|J_L|$. 
The case $L=1$ follows directly from Theorem \ref{th:W_prop}.  Now we assume as 
induction hypothesis of the first level induction that
$L\ge 2$ and 
\begin{eqnarray}
\E \bigg[ \prod_{\ell=1}^{L-1} \w{y^{J_\ell}}  y^{J'} \bigg] = \sum_{\pi \in 
\pazocal{P}(\sum_{\ell=1}^{L-1} J_\ell+J')}\prod_{A \in \pi}
\!\left(\kappa[y_A] \cf(A \not\subset J_\ell\  \forall \ell<L)\right) \,.
\label{ind_hypoth}
\end{eqnarray}
Then we consider the second induction over $ |J_L| =: m$.  For $m=0$, we have 
$\w{y^{J_L}}=1$ and thus then the induction hypothesis (\ref{ind_hypoth}) 
directly
implies (\ref{wick_prod_multi}).
As induction step of the second level hypothesis we take that, for fixed $L$, 
equation (\ref{wick_prod_multi}) holds for all $|J_L|<m$.
Then, if $|J_L|=m$, we can use (\ref{eq:defTJ}), (\ref{ind_hypoth}) and the 
second level induction hypothesis to justify the following argument analogous 
to the one used in the proof of Theorem \ref{th:W_prop}
\begin{align}
& \E \bigg[ \prod_{\ell=1}^L \w{y^{J_\ell}} y^{J'} \bigg]  = 
\E \bigg[ \prod_{\ell=1}^{L-1} \w{y^{J_\ell}}   y^{J_L+J'}\bigg] - 
\sum_{\emptyset\ne E\subset J_L} \E[y^E]\, \E \bigg[
\prod_{\ell=1}^{L-1} \w{y^{J_\ell}}  \w{y^{J_L \backslash E}} y^{J'} \bigg] 
\nonumber \\ & \quad
= 
\sum_{\pi \in \pazocal{P}(\sum_{\ell=1}^{L-1} J_\ell+J_L+J')}\prod_{A \in \pi}
\!\left(\kappa[y_A] \cf(A \not\subset J_\ell\  \forall \ell<L) \right)
\nonumber \\ & \qquad
 - \sum_{\emptyset\ne E\subset J_L} \sum_{\pi_0 \in\pazocal{P}(E)} \prod_{A\in 
\pi_0} \kappa[y_A]
 \sum_{\pi_1 \in \pazocal{P}(\sum_{\ell=1}^{L-1} J_\ell+(J_L\setminus E)+J')} 
\prod_{A' \in \pi_1} \biggl(\kappa[y_{A'}]
\nonumber \\ & \qquad\quad
 \times \cf(A \not\subset J_\ell\  \forall \ell<L) \cf(A'\not\subset 
J_L\setminus E) \biggr)
\nonumber \\ & \quad
= \sum_{\pi \in \pazocal{P}(I)} \prod_{A \in \pi}\!\left(\kappa[y_A] \cf(A 
\not\subset J_\ell\  \forall \ell)\right)
\nonumber \\ & \qquad
+ \sum_{\pi \in \pazocal{P}(I)}\cf(\exists A\in \pi\text{ s.t. }A \subset J_L)  
\prod_{A \in \pi}
\!\left(\kappa[y_A] \cf(A \not\subset J_\ell\  \forall \ell<L)\right)
\nonumber \\ & \qquad
 - \sum_{\pi \in \pazocal{P}(I)} \cf(\cup \defset{A\in \pi}{A\subset J_L}\ne 
\emptyset)
 \prod_{A \in \pi}\!\left(\kappa[y_A] \cf(A \not\subset J_\ell\  \forall 
\ell<L)\right)
\nonumber \\ & \quad
= \sum_{\pi \in \pazocal{P}(I)} \prod_{A \in \pi}\!\left(\kappa[y_A] \cf(A 
\not\subset J_\ell\  \forall \ell)\right) \, .
\end{align}
This completes the induction step and hence also the proof.
 \end{proof}

\section{Cumulants and Wick polynomials as dynamical variables}\label{sec:cwdyn}

To study the state of a random system, it is often better to use the cumulants
rather than the moments of the random
variables.  For instance, if $y,z$ are independent
random variables we have $\E[y^n z^m]= \E[y^n] \E[z^m]$, which is typically 
nonzero, whereas the corresponding cumulant
is zero whenever both $n,m\ne 0$.
Hence, for systems where two regions become ``asymptotically independent'' (for 
instance, for sufficiently mixing
stochastic processes), it is the cumulants, not moments,
which will vanish in the limit. 

To have a concrete example, let us consider a random lattice field $y_x$, 
$x\in\Z^d$, which is spatially sufficiently
strongly mixing.  
Then, for instance, $\kappa(y_0,y_I)\to 0$ if the distance of the index set 
$I\subset \Z^d$ from the origin becomes
unbounded.  Often in the applications the mixing is so strong that
the cumulants,
in this case also called connected correlation functions, become 
``$\ell_1$-clustering'': for any order $n$ one
requires that
$\sup_{x\in \Z^d}\sum_{I\in (\Z^d)^{n-1}} |\kappa(y_x,y_I)|<\infty$.
Naturally, such a property should then not be expected to hold for standard 
correlation functions $\E[y^I]$, apart from
some degenerate examples.

In addition to being mixing, the random fields found in the applications are 
often \defem{translation invariant}.  This means in particular that all 
moments $\E[y^I]$ remain invariant if every index in $I$ is translated by a 
fixed amount, i.e., $\E[y^{I(x)}]=\E[y^I]$ for every $x\in \Z^d$ if we set
$I(x)_\ell := i_\ell-x$.
If the system is both $\ell_1$-clustering and translation invariant,
the cumulants of the Fourier transformed field $\FT{y}_k:= \sum_{x\in \Z^d} 
\rme^{-\ci 2\pi k\cdot x} y_x$, $k$ indexed by the $d$-torus $\T^d$, satisfy
\begin{align}\label{trans_inv_cum}
 \kappa[\FT{y}_{(k_1,k_2,\ldots,k_n)}] = \delta\Bigl(\sum_{\ell=1}^n 
k_\ell\Bigr) \FT{F}_n(k_1,k_2,\ldots,k_n) \, ,
\end{align}
where ``$\delta$'' denotes the Dirac delta distribution and the arithmetic on 
$\T^d$ is defined via periodic identification. Here $\FT{F}_n$
denotes the Fourier transform of $F_n(X):=\cf(X_1=0)\kappa(y_{X})$, $X\in 
(\Z^d)^n$, and for $\ell_1$-clustering measures $\FT{F}_n$
is a uniformly bounded continuous function of $k$.
Therefore, although the cumulants are singular, their singularity structure is 
simple, entirely encoded in the $\delta$-multiplier.
In contrast, by the moments-to-cumulants formula, then for $I:=(1,2,\ldots,n)$ 
and any $k\in (\T^d)^n$
\begin{align}
\mathbb{E}[\FT{y}^{k_I}] = \sum_{\pi \in\pazocal{P}(I)} \prod_{A\in \pi} 
\bigg(\delta\Bigl(\sum_{j\in A} k_j\Bigr)
\FT{F}_{|A|}(k_A)\biggr) \,,
\end{align}
which has ever more complicated singularity structure as the order of the moment 
is increased.
(The above discussion can be made mathematically rigorous by replacing the 
infinite lattice by a periodic $d$-dimensional lattice.
See \cite{LuSp11} for details.)

Hence, for stochastic processes which lead towards a state which is mixing and 
translation invariant,
it seems better to focus on the time-evolution of cumulants instead of the 
corresponding moments.
However, it is not immediately clear how to avoid resorting to the moments as a 
middle step.
It turns out that using Wick polynomials instead of ``bare'' monomials to define 
the time-evolution helps in achieving this goal.
Recall that any monomial term $y^I$
can always be expanded in terms of Wick polynomials using (\ref{inv_wick}), albeit with 
state-dependent coefficients.

To have a concrete example how this could work in practice, we consider in the 
following the case of deterministic evolution with random initial data.
Explicitly, 
we assume that the system is described by random variables $y_j(t)$, where $j$ 
belongs to a fixed (finite) index set $J$, and $t\ge 0$ denotes time.
The initial values $y_j(0)$ are random with a joint distribution $\mu_0$, 
and for each realization of $y(0)$ we assume that the 
values at later times $t>0$ are determined from the
solution to the differential equation
\begin{align}\label{eq:basicev}
 \partial_{t}y_j(t) = \sum_{I\in \mathcal{I}_j} M^I_j(t) \w{y(t)^I}\, ,
\end{align}
where the functions $M^I_j(t)$ are ``interaction amplitudes'' from the $I$:th Wick 
polynomial of $y(t)$ to $y_j(t)$. For each  $j\in J$, the set
$\mathcal{I}_j$ collects those $I\in \mathcal{I}$ which have a nonzero 
amplitude, i.e.,
$M^I_j(t)\ne 0$ for some $t>0$.  
For simplicity, we assume here that $\mathcal{I}_j$
is finite and that the amplitudes $M_j^I(t)$ are some fixed functions of time.  
(They typically might depend on the cumulants of $y(t)$, but this is not
relevant for the discussion below: it suffices that they are not random 
variables.)

We present a concrete example of such a dynamical system in Appendix 
\ref{sec:App_Ham} where we show how the evolution of 
$N$ classical particles interacting 
via a polynomial interaction potential 
can be described by a system of this type assuming a known random distribution of 
initial positions and momenta.  Another explicit example is given in Section 
\ref{sec:DNLS}.

The usefulness of representing the dynamics in the form (\ref{eq:basicev}) 
becomes apparent when looking at the evolution of cumulants.
To avoid technical complications, let us suppose that the joint exponential 
moments of 
$y(t)$ exist and are continuously differentiable and uniformly bounded functions 
of $t$.  This will simplify
the discussion since it allows using the generating functions 
defined earlier in (\ref{eq:defallgenf}).  
With the shorthand notation $\lambda \cdot y := \sum_{j\in J} \lambda_j y_j$, 
the cumulant generating function of $y(t)$, $t$ fixed, is
$g_t(\lambda) = \ln \mean{\rme^{\lambda \cdot y(t)}}$, 
and the Wick polynomial generating function is $G(\lambda;y(t))= 
\frac{\rme^{\lambda\cdot y(t)}}{\E[\rme^{\lambda\cdot y(t)}]} =
\rme^{\lambda\cdot y(t)-g_t(\lambda)}$.  The time-evolution of the cumulant 
generating function is connected
to the Wick polynomial generating function by
\begin{eqnarray}
 \partial_t g_t(\lambda) = \mean{\rme^{\lambda \cdot y(t)}}^{-1} 
\mean{\lambda\cdot \partial_t y(t)\, \rme^{\lambda \cdot y(t)} }
 = \mean{ \lambda\cdot \partial_t y(t)\, G(\lambda;y(t))}\, .
\end{eqnarray}
Therefore, for any $I'\ne \emptyset$ (using the slightly symbolic notations 
``$I\setminus i$'' and ``$y_i$'' instead of
$\cancelm{k}{I}$ and $y_{i_k}$ when  $i=(k,i_k)\in I$)
\begin{align}
 \partial_t \kappa[y(t)_{I'}] = \partial^{I'}_{\lambda}  \partial_t g_t(\lambda) 
\big\vert_{\mathbf{\lambda}=0}
 = \sum_{i \in I'} \mean{ \partial_t y_i(t) \w{y(t)^{I'\setminus i}} } \, .
\end{align}
Hence, if the evolution satisfies (\ref{eq:basicev}), we obtain
\begin{eqnarray}
 \partial_t \kappa[y(t)_{I'}] = \sum_{i \in I'} \sum_{I\in \mathcal{I}_i} 
M^I_i(t)
 \mean{ \w{y(t)^I} \w{y(t)^{I'\setminus i}} } \, .
 \label{cum_evolution}
\end{eqnarray}
In this case, determining the evolution of expectation values of all 
multiplications
of two Wick products, $\mean{ \w{y(t)^{I_1}} \w{y(t)^{I_2}} }$, where both 
$I_1,I_2$ are non-empty, would also yield a solution to the evolution of 
cumulants.

We can now obtain a closed cumulant evolution hierarchy using 
(\ref{cum_evolution}) and Theorems \ref{th:W_prop} and \ref{th:W_mult_prop}.
First, note that for any $I\ne \emptyset$ and $j\in J$ we have
\begin{eqnarray}\label{eq:ktowprod}
 \mean{  \w{y(t)^{I}} \w{y_j(t)} }
 =
 \mean{ \w{y(t)^{I}} y_j(t)  }
 =\kappa[y(t)_{I+(j)}]
   \, ,
 \label{cum_wick}
\end{eqnarray}
since in this case there is exactly one non-internal cluster, the entire set 
$I+(j)$. 
In addition, if $I=\emptyset$, we clearly have $\mean{ \w{y(t)^I} \w{y(t)^{I'}} 
}=\cf(I'=\emptyset)$.  Therefore, the first
two cumulants satisfy, for arbitrary $j,j'\in J$,
\begin{align} \label{eq:cum_hierarchy1}
  \partial_t \kappa[y_j(t)] & =  \cf(\emptyset \in \mathcal{I}_j) 
M^\emptyset_j(t)\, , \\
  \partial_t \kappa[y(t)_{(j,j')}]&  =
  \sum_{\emptyset \ne I\in \mathcal{I}_{j}} 
M^I_{j}(t)\kappa[y(t)_{I+(j')}]+\sum_{\emptyset \ne I\in \mathcal{I}_{j'}} 
M^I_{j'}(t)\kappa[y(t)_{I+(j)}]\, .
\end{align}
For higher order cumulants, with $|I'|\ge 3$,
the equation typically becomes nonlinear; we then have
\begin{align} \label{eq:cum_hierarchy2}
  \partial_t \kappa[y(t)_{I'}]&  =
  \sum_{i \in I'} \sum_{j\in J} \cf((j)\in \mathcal{I}_i) M^{(j)}_i(t) 
\kappa[y(t)_{(j)+(I'\setminus i)}]
  +  \sum_{i \in I'} 
  \sum_{I\in \mathcal{I}_i, |I|\ge 2} M^I_i(t)
  \mean{ \w{y(t)^I} \w{y(t)^{I'\setminus i}} } \, .
\end{align}
We have separated here the terms with $|I|=1$ to show how they operate linearly on 
the cumulants of order $|I'|$
(note that $\kappa[y(t)_{(j)+(I'\setminus i)}]=\kappa[y(t)_{I''}]$ for the 
sequence $I''$ which is obtained from $I'$
by replacing $i$ with $j$).
In the final sum, both $|I|$ and $|I'\setminus i|$ are greater than one, so it has a cumulant
expansion 
\begin{align}
 \mean{ \w{y(t)^I} \w{y(t)^{I'\setminus i}} } 
 = \sum_{\pi \in \pazocal{P}(I+(I'\setminus i))} \prod_{A \in \pi}
\!\left(\kappa[y(t)_A] \cf(A\cap I \ne \emptyset, \
A\cap (I'\setminus i) \ne \emptyset)\right) \, ,
\end{align}
i.e., all clusters have to contain at least one element from both sequences.  In 
particular, 
it cannot contain any singlets, i.e., it does not depend on any of 
$\kappa[y_j(t)]$, $j\in J$.  
Let us also point out that since in these terms $|I|+|I'\setminus i|>|I'|$, any 
linear term is 
necessarily of higher order.  In particular, this means that lower order 
cumulants can appear only in nonlinear combinations.

Instead of studying the full cumulant hierarchy, one can also use evolution 
estimates for the Wick polynomials.
The situation often encountered in the applications is that the properties of 
the initial measure are fairly well known, whereas
very little a priori control exists for the time-evolved measure.  In such a 
case, one can use the above result and obtain a perturbation 
expansion by applying the fundamental theorem of calculus.  With the shorthand 
notation $y:=y(0)$ we have
\begin{eqnarray}
\kappa[y(t)_{I'}]&=& \kappa[y_{I'}] + \int_0^t ds \sum_{i \in I'} \sum_{I\in 
\mathcal{I}_i} M^I_i(s)
 \mean{ \w{y(s)^I} \w{y(s)^{I'\setminus i}} } \nonumber \\
 &=& \kappa[y_{I'}] + \sum_{i \in I'} \sum_{I\in \mathcal{I}_i}
 \mean{ \w{y^I} \w{y^{I'\setminus i}} } \int_0^t ds M^I_i(s) \nonumber \\
 && + \sum_{i \in I'} \sum_{I\in \mathcal{I}_i}  \int_0^{t}ds'
 \partial_{s'} \mean{ \w{y(s')^I} \w{y(s')^{I'\setminus i}} }
 \int_{s'}^t ds  M^I_i(s) \, ,
 \label{cum_expansion}
\end{eqnarray}
where we have applied Fubini's theorem to the final integral.  This type of 
expansion could be helpful if
the coefficients $\int_0^t ds M^I_i(s)$ behave better than $M^I_i(t)$, such as 
in the presence of fast oscillations.
Further iterations of this procedure,
using either the above cumulant hierarchy or any of the Wick polynomial 
hierarchies below,
would then yield an expansion of $\kappa[y(t)_{I'}]$ in terms of the
expectations at time $t=0$ and the time dependent amplitudes $ M^I(t)$.  This is 
particularly useful if all $M^I(t)$ are small, since each iteration adds
one more such factor.

Let us conclude this section by deriving recursion formulae for the products of 
Wick polynomials.  As mentioned earlier, these could then
be used instead of the direct cumulant hierarchy to study the time-evolution of 
the cumulants.
For this, it would suffice to study $ \mean{ \w{y(s')^I} \w{y(s')^{I'\setminus 
i}} }$ appearing in (\ref{cum_evolution}), but typically
the products of two terms do not satisfy a closed evolution equation and a full 
hierarchy will be needed.
Let us begin with the evolution equation for $ \w{y(t)^I}$.  For any 
deterministic evolution
process, we can obtain a fairly compact evolution equation by treating the 
time-derivative $\partial_t{y}_j$ as a new
random variable:
\begin{eqnarray}
\partial_t \w{y(t)^I} = \partial^{I}_{\lambda}\partial_t G(\lambda;y(t)) 
\big\vert_{\mathbf{\lambda}=0}
= \sum_{i \in I} \w{\,(\partial_t y_i(t)) y(t)^{I \backslash i}} \, .
\label{wick_evolution1}
\end{eqnarray}
The form is analogous to the standard Leibniz rule.  For products of Wick 
polynomials, we thus have
\begin{eqnarray}
\partial_t \prod_{k=1}^n\w{y(t)^{I_k}}
= \sum_{k=1}^n \sum_{i \in I_k} \w{\,(\partial_t y_i(t)) y(t)^{I_k \backslash 
i}} \prod_{k'\ne k} \w{y(t)^{I_{k'}}} \, .
%\label{nested_wick}
\end{eqnarray}
Assuming (\ref{eq:basicev}) and using multilinearity, we then obtain the 
following equation involving
``nested Wick products'':
\begin{eqnarray}\label{eq:nestedWick}
\partial_t \E \bigg[ \prod_{k=1}^n\w{y(t)^{I_k}} \bigg]
= \sum_{k=1}^n \sum_{i \in I_k} \sum_{I \in \mathcal{I}_i} M_i^I(t)\E \bigg[ 
\w{\,(\w{y(t)^I})
 y(t)^{I_k \backslash i}} \prod_{k'\ne k} \w{y(t)^{I_{k'}}} \, \bigg] .
\label{nested_wick}
\end{eqnarray}

The formula (\ref{eq:nestedWick})
is appealing in its simplicity but it does not directly lead to closed hierarchy 
of equations.  This can be achieved by expanding the
nested product in terms of cumulants and Wick products.  For this, let us note 
that by (\ref{recursion_prop}) and the
observation made after (\ref{eq:ktowprod}),
we have for any $I'$
\begin{eqnarray}
\w{\,(\partial_t y_i(t)) y(t)^{I'}} =
\partial_t y_i(t) \w{y(t)^{I'}}\ -\ \sum_{U \subset I'} \E[\partial_t y_i(t) 
\w{y(t)^U}] \w{y(t)^{I' \backslash U}} \, .
\end{eqnarray}
Therefore, whenever (\ref{eq:basicev}) holds, we find that
\begin{align}\label{second_hierarchy}
 & \partial_t  \E\Bigl[\prod_{k=1}^n\w{y(t)^{I_k}}\Bigr] 
   = \sum_{k=1}^n \sum_{i \in I_k} \sum_{I_{n+1}\in \mathcal{I}_i} 
M^{I_{n+1}}_i(t) \biggl\lbrace
   \E\Bigl[\w{y(t)^{I_k \backslash i}} \prod_{k'=1;k'\ne 
k}^{n+1}\w{y(t)^{I_{k'}}}\Bigr]
   \nonumber \\ & \qquad
   -  \sum_{ U \subset I_k \backslash i}   \E\Bigl[\w{y(t)^{(I_k \backslash i) 
\backslash U}} \prod_{k'=1;k'\ne k}^{n}\w{y(t)^{I_{k'}}}\Bigr]
   \ \E\Bigl[\w{y(t)^{I_{n+1}}}\w{y(t)^{U}}\Bigr]
\biggr\rbrace
\end{align}
This forms a closed hierarchy of evolution equations for the collection of all
expectation values of the type $\E\Bigl[\prod_{k=1}^n\w{y(t)^{I_k}}\Bigr]$.

A second alternative for the hierarchy follows from the observation that if 
$y(t)$ and $z(t)$ are two processes which start
with independent, identically distributed initial data, then at any later moment 
they are also independent and identically distributed and hence
\begin{align}
& \partial_t G(\lambda;y(t)) = G(\lambda;y(t)) \left(\lambda\cdot \partial_t 
y(t) - \E_z[\lambda\cdot \partial_t z(t) \,G(\lambda;z(t)) ] \right)
\\ \nonumber & \quad 
=\E_z\left[G(\lambda;y(t))G(\lambda;z(t)) \lambda\cdot(\partial_t y(t) 
-\partial_t z(t) ) \right]
\end{align}
where in the second equality we have used $\E_z[G(\lambda;z(t))]=1$.
Consider then the product measure for the processes $y,z$ and
let $G'$ denote the corresponding Wick polynomial generating function.
Since by Fubini's theorem then $\E_{y,z}[\rme^{\lambda\cdot(y(t)+z(t))}]= 
\E_{y}[\rme^{\lambda\cdot y(t)}]^2$, now
$G(\lambda;y(t))G(\lambda;z(t)) = G'(\lambda;y(t)+z(t))$.
Hence, for dynamics satisfying (\ref{eq:basicev}) 
\begin{align}
 & \partial_t  \E_y\!\left[\prod_{k=1}^n\w{y(t)^{I_k}}\right] 
   = \sum_{k=1}^n \sum_{i \in I_k} \sum_{I\in \mathcal{I}_i} M^I_i(t) 
   \E_{y,z}\Bigl[(\w{y(t)^I} -\w{z(t)^I} )\ \w{\, (y(t)+z(t))^{I_k\setminus i}} 
\prod_{k'\ne k} \w{y(t)^{I_{k'}}} \Bigr]\, .
   \label{third_hierarchy_1}
\end{align}
Let us point out that the earlier expression in (\ref{second_hierarchy}) follows 
from the above one if we expand the power $(y(t)+z(t))^{I_k\setminus i}$
and then use the fact that the joint measure is a product measure.
The formula does not yet yield a closed hierarchy but the following 
generalization does so: if $z_{k,\ell}(t)$ are processes such that
their joint initial distribution is given, then 
\begin{align}
 & \partial_t  \E\!\left[\prod_{k=1}^n\w{\,\Bigl(\sum_{\ell} 
z_{k,\ell}(t)\Bigr)^{\!I_k}}\right]
   = \sum_{k=1}^n \sum_{i \in I_k} \sum_{I\in \mathcal{I}_i} \sum_{\ell} 
M^I_i(t) 
      \nonumber \\ & \qquad
   \times \E'_k\Bigl[(\w{z_{k,\ell}(t)^I} -\w{z'_{k,\ell}(t)^I} )\
   \w{\,\Bigl(\sum_{\ell'} (z_{k,\ell'}(t)+z'_{k,\ell'}(t))\Bigr)^{I_k\setminus 
i}} \prod_{k'\ne k}
   \w{\,\Bigl(\sum_{\ell'} z_{k',\ell'}(t)\Bigr)^{\!I_{k'}}} \Bigr]\, ,
   \label{third_hierarchy_2}
\end{align}
where $\E'_k$ refers to a measure where $z_{k,\ell}$ for each $\ell$ has been 
independently duplicated in the $z'_{k,\ell}$-process.

A possible benefit of this formulation could be when $z_{k,\ell}(t)$ have mean
zero and are independent for all $\ell$.   Then the central limit theorem 
governs the behavior of $\sum_{\ell} z_{k,\ell}$
when there are many terms in the sum.
Therefore, it could be of help in controlling the 
otherwise difficult case where one has performed many iterations 
starting from (\ref{third_hierarchy_1}).

\section{Kinetic theory of the discrete NLS equation revisited}\label{sec:DNLS}

In this section we apply the previous Wick polynomial techniques to the discrete 
nonlinear Schr\"{o}dinger equation.  This example is chosen since it has a particular simple, but 
nontrivial, Wick expansion of the evolution equation.
In addition, we can then rely on the rigorous results and known properties from 
an earlier work on the model \cite{LuSp11}.
We focus on the kinetic theory of the model on the ``kinetic''
time-scale which is $O(\lambda^{-2})$
if the nonlinear coupling $\lambda$ is taken to zero.  

We begin by going through the derivation of the Boltzmann-Peierls equation in 
the spatially homogeneous case,
and show how the task is simplified by using the Wick expanded dynamics and the 
cumulant hierarchy, as explained in Section \ref{sec:cwdyn}.
We only consider terms which would be present in the simplified case of Gaussian initial 
data.  We give an example in Appendix \ref{sec:nonGauss} which highlights the mechanism 
leading to suppression of the additional effects of non-Gaussian initial data 
in the kinetic scaling limit.

For this particular setup, it is easy to find dynamical variables whose 
evolution equation does not have a linear part.  This is an
important simplification since it negates a term which becomes rapidly 
oscillating on the kinetic time-scale, having a divergent
frequency $O(\lambda^{-1})$ in the kinetic scaling limit.  The effect becomes 
apparent when looking at field time-correlations instead of the 
evolution of equal time cumulants.  We discuss the issue in more detail in 
Section \ref{sec:inhomog}.

In \cite{LuSp11}, the initial data is taken to be given by a thermal Gibbs 
measure which is stationary both in space and in time.
We do not assume the initial data to be time-stationary here, but 
the computations in this section require space translation invariance.  The spatially
inhomogeneous case is technically substantially more 
complicated, and we discuss it only briefly in Section \ref{sec:inhomog}.

The results in this section are derived in the spirit of standard perturbation 
theory and focus solely on evolution on short kinetic time
scales, $t=\tau \lambda^{-2}$ with $0<\tau\ll 1$.  
It is however possible to apply the cumulant 
hierarchy differently, leading to equations which do not require 
taking a scaling limit.  We conclude the study of the DNLS model in
Section \ref{sec:beyondkin} by proposing a reformulation of the problem which
leads to Boltzmann type evolution equations which could be accurate also for 
times longer than $O(\lambda^{-2})$.  The discussion is not completely mathematically rigorous, 
but we propose a scheme which could be used to this end under some natural conditions about the time-evolved state.

The discrete NLS equation on the lattice $\Z^d$
deals with functions  $ \psi: \R \times \Z^d \rightarrow \C $ which satisfy
\begin{align}\label{eq:defNLSx}
\ci\partial_t \psi_t(\mathbf{x}) = \sum_{\mathbf{y} \in \Z^d} 
\alpha(\mathbf{x}-\mathbf{y})\psi_t(\mathbf{y}) + \lambda |\psi_t(\mathbf{x})|^2 
\psi_t(\mathbf{x}) \, .
 \end{align}
Here the function $ \alpha: \Z^d \rightarrow \R$ is called the hopping amplitude 
and we assume that it is
symmetric, $\alpha(\mathbf{x})=\alpha(-\mathbf{x})$, and exponentially 
decreasing.  The parameter 
$ \lambda>0$ is considered to be small, and in the kinetic scaling limit we take 
$\lambda\to 0$ 
and $ t = \tau \lambda^{-2}$ with $\tau>0$ fixed.
The initial field $\psi_0$ is assumed to be random, bounded on finite subsets of 
the lattice, 
and to have an $\ell_1$-clustering distribution.  We aim at controlling the 
moments of the random variables $\psi_t(\mathbf{x})$ and $\psi_t(\mathbf{x})^*$ 
which we label using $\psi_t(\mathbf{x},+1):=\psi_t(\mathbf{x})$
and $\psi_t(\mathbf{x},-1):=\psi_t(\mathbf{x})^*$.

Since we do not assume that $\psi_0$ is $\ell_2$-summable, even the (almost 
sure) 
existence of solutions to (\ref{eq:defNLSx}) becomes an issue.  
To our knowledge, it has not been proven for the above setup, and most likely, 
some additional 
assumptions about the increase of the values of the initial field at infinity 
are needed for proper existence theory.
However, these problems can be easily avoided by replacing the infinite lattice 
$\Z^d$ by a finite lattice with 
periodic boundary conditions: see \cite{LuSp11} for details.  This would merely 
result in replacing the lattice, the Fourier space and transform,
and the associated $\delta$-functions by their finite lattice counterparts.  
Since  even then the final limits cannot be rigorously controlled,
we opt here for some additional formality in the discussion, but with less 
complicated formulae to deal with.

For technical simplicity, here we also only consider initial data which are 
``gauge invariant'': we will always suppose
$ \psi_0(\mathbf{x})$ has the same distribution as $ 
\rme^{\ci\theta}\psi_0(\mathbf{x})$ for any $ \theta \in [0,2 \pi]$.  In fact, 
this transformation
commutes with the time evolution, i.e., if the initial field is changed from 
$\psi_0$ to  $ \rme^{\ci\theta}\psi_0$,
the time-evolved field will change from $\psi_t$ to $ \rme^{\ci\theta}\psi_t$.  
In particular, 
also the field $\psi_t$ will then be gauge invariant.  The main reason for 
insisting on this assumption is that it will 
automatically force many cumulants to be zero and hence simplify the 
combinatorics associated with the cumulant hierarchy.
Gauge invariance implies that a moment is zero unless
it has the same number of $\psi$ and $\psi^*$ factors, even when 
the fields are evaluated at different times.  Hence, it implies that 
every odd moment of the fields is zero and hence also every odd cumulant.
Similarly, we see that even cumulants are also zero if they concern a
different number of $ \psi^*$ and $ \psi$ variables.  

For instance, we find using (\ref{wick_cum}) that for any gauge invariant state 
and any $a_j := \psi_{t_j}(\mathbf{x}_j,\sigma_j)$, $j=1,2,3$, 
\begin{align}
\w{a_1 a_2 a_3} = a_1 a_2 a_3 - \E[a_1 a_2] a_3 - \E[a_1 a_3] a_2- \E[a_2 a_3] 
a_1\, .
\end{align}
This is the definition of the ``pairing truncation operation'' 
$\widehat{\pazocal{P}}$ given in Lemma 3.2 of \cite{LuSp11}.
Applying the truncation operation in the evolution equation
was one of the key changes to the standard perturbation theory which allowed the 
rigorous analysis in \cite{LuSp11}.
With the benefit of hindsight, we can now identify it as a Wick contraction of 
the random variables.

Under the above assumptions and using (\ref{inv_wick}), we 
find that (\ref{eq:defNLSx}) is equivalent to the following Wick contracted 
evolution equation
\begin{align}
& \ci\partial_t \psi_t(\mathbf{x}) = \sum_{\mathbf{y} \in \Z^d} 
\alpha(\mathbf{x}-\mathbf{y}) \w{\psi_t(\mathbf{y})} +
 2 \lambda \E[\psi_t(\mathbf{x})^*  \psi_t(\mathbf{x})] \w{\psi_t(\mathbf{x})}
%      \nonumber \\ & \qquad
+ \lambda \w{\psi_t(\mathbf{x})^*  \psi_t(\mathbf{x}) \psi_t(\mathbf{x})} \, .
\end{align}
Hence, the random variables $\psi_t(\mathbf{x},\sigma)$, $\sigma=\pm 1$, satisfy 
an evolution equation of a form required in the previous section,
in (\ref{eq:basicev}),
\begin{align}\label{eq:defNLSwick}
& \ci\sigma \partial_t \psi_t(\mathbf{x},\sigma) = \sum_{\mathbf{y} \in \Z^d} 
\alpha(\mathbf{x}-\mathbf{y}) \w{\psi_t(\mathbf{y},\sigma)} +
  \lambda R_t(\mathbf{x}) \w{\psi_t(\mathbf{x},\sigma)}
+ \lambda \w{\psi_t(\mathbf{x},-1)  \psi_t(\mathbf{x},\sigma) 
\psi_t(\mathbf{x},1)} \, ,
\end{align}
where we have defined $R_t(x):= 2 \E[|\psi_t(\mathbf{x})|^2]\ge 0$, which is 
also equal to $2\kappa(\psi_t(\mathbf{x},-1), \psi_t(\mathbf{x},1))$.

\subsection{Translation invariant initial measures}\label{sec:homog}

The evolution problem simplifies significantly, if we assume that 
the initial data is not only gauge, 
but also translation invariant.  Since also 
spatial translations commute with the time evolution, we can use the earlier 
results for the cumulants of Fourier transforms of the random field.  In 
particular, then for any $t\ge 0$ and $\mathbf{x},\mathbf{y}\in \Z^d$,
\begin{align}
\E[\psi_t(\mathbf{x})^*\psi_t(\mathbf{y})]=\E[\psi_t(\mathbf{x}-\mathbf{y}
)^*\psi_t(0)]\,. 
\end{align}
This implies that $R_t(\mathbf{x}) = R_t(0)=: R_t$ for all $t$, and therefore 
the evolution equation (\ref{eq:defNLSwick}) 
for translation and gauge invariant initial data can be written as
\begin{align}\label{eq:NLSti}
& \ci\sigma \partial_t \psi_t(\mathbf{x},\sigma) = \sum_{\mathbf{y} \in \Z^d} 
\alpha^\lambda_t(\mathbf{x}-\mathbf{y}) \w{\psi_t(\mathbf{y},\sigma)} 
+ \lambda \w{\psi_t(\mathbf{x},-1)  \psi_t(\mathbf{x},\sigma) 
\psi_t(\mathbf{x},1)} \, ,
\end{align}
where $\alpha^\lambda_t(\mathbf{x}) := \alpha(\mathbf{x}) + \lambda 
\cf(\mathbf{x}=\mathbf{0}) R_t $.  

Using multilinearity of the Wick polynomials, we thus find the following 
evolution equation for the Fourier 
transformed fields $\FT{\psi}_t(\mathbf{k},\sigma) := \sum_{\mathbf{x}} 
\rme^{-\ci 2 \pi \mathbf{k}\cdot \mathbf{x}} \psi_t(\mathbf{x},\sigma)$,
\begin{align}\label{eq:NLStiFT}
&\partial_t \FT{\psi}_t(\mathbf{k},\sigma)= -\ci \sigma  
\omega_t^\lambda(\mathbf{k})  \w{\FT{\psi}_t(\mathbf{k},\sigma)}
      \nonumber \\ & \qquad
- \ci  \sigma\lambda \int_{(\T^{d})^3} d\mathbf{k}_1 d\mathbf{k}_2 
d\mathbf{k}_3 
 \delta(\mathbf{k}- \mathbf{k}_1 - \mathbf{k}_2 -\mathbf{k}_3) 
\w{\FT{\psi}_t(\mathbf{k}_1,-1)\FT{\psi}_t(\mathbf{k}_2,\sigma)\FT{\psi}
_t(\mathbf{k}_3,1)}\, ,
\end{align}
where 
\begin{align}
 \omega^\lambda_t(\mathbf{k}) := \FT{\alpha}^\lambda_t(\mathbf{k}) = 
\FT{\alpha}(\mathbf{k})+ \lambda R_t \, .
\end{align}
For later use, let us point out that the definitions of the random fields imply 
the following 
rule for complex conjugation of the Fourier transformed fields: 
$\FT{\psi}_t(\mathbf{k},\sigma)^* = \FT{\psi}_t(-\mathbf{k},-\sigma)$.
In addition, the assumed symmetry of $\alpha$ implies the symmetry 
$\omega^\lambda_t(-\mathbf{k})=\omega^\lambda_t(\mathbf{k})$.

We recall that in the present translation invariant setting, the $n$:th 
cumulants satisfy for $k\in (\T^d)^n$, $\sigma\in \set{-1,1}^n$
\begin{align}
 \kappa(\FT{\psi}_t(\mathbf{k}_1,\sigma_1),\ldots, 
\FT{\psi}_t(\mathbf{k}_n,\sigma_n))
 = \delta\Bigl(\sum_{\ell=1}^n \mathbf{k}_\ell\Bigr) \FT{F}_n(k,\sigma;t) \, ,
\end{align}
where $F_n(x,\sigma;t) := \cf(\mathbf{x}_1=0) 
\kappa(\psi_t(\mathbf{x}_1,\sigma_1),\ldots,\psi_t(\mathbf{x}_n,\sigma_n))$
is identically zero unless $\sum_\ell \sigma_\ell=0$.  We are now mainly 
interested in the evolution of the lowest nonzero cumulants, i.e., 
of $F_2(\mathbf{x},(-1,1))$.  We denote its Fourier transform by $W$; more 
precisely, we set
\begin{equation}
W^{\lambda}_t(\mathbf{k}) := \sum_{ \mathbf{x}\in\Z^d}\rme^{-\ci 2 \pi 
\mathbf{k} \cdot \mathbf{x}}\kappa(\psi_t(0)^*, \psi_t( \mathbf{x}))
= \sum_{ \mathbf{x}\in\Z^d}\rme^{-\ci 2 \pi \mathbf{k} \cdot 
\mathbf{x}}\E[\psi_t(0)^* \psi_t( \mathbf{x})] \, .
\label{W_definition}
\end{equation}
%and $ W^{\lambda}_{t=0}(\mathbf{k}) := W^{\lambda}(\mathbf{k}) $.
It follows that $\FT{F}_2((\mathbf{k}_1,\mathbf{k}_2),(-1,1);t)=  
W^{\lambda}_t(\mathbf{k}_2)$ and 
$\FT{F}_2((\mathbf{k}_1,\mathbf{k}_2),(1,-1);t)=  W^{\lambda}_t(-\mathbf{k}_2)$. 
 Therefore, we have the following general rule for second order cumulants:
\begin{align}\label{eq:kappa2psi}
 \kappa(\FT{\psi}_t(\mathbf{k}_1,\sigma_1), \FT{\psi}_t(\mathbf{k}_2,\sigma_2))
 = \delta(\mathbf{k}_1+\mathbf{k}_2) \cf(\sigma_1+\sigma_2=0) 
W^{\lambda}_t(\sigma_2\mathbf{k}_2)\, .
\end{align}

Therefore, to study the evolution of all second moments in this systems, it 
suffices to study the function $W^{\lambda}_t$. 
In particular, clearly $R_t =2 \int_{\T^{d}}\! 
d\mathbf{k}\,W^{\lambda}_t(\mathbf{k})$.
Note that by translation invariance, $W^{\lambda}_t$ is always real valued.

\subsubsection{Heuristic derivation of the Boltzmann-Peierls equation}\label{derivation_boltzmann}

After these preliminaries, we are ready for 
an application of the cumulant hierarchy to study the Fourier transformed fields.
Our first goal is to justify the Boltzmann-Peierls equation which has been 
conjectured before, based on perturbation expansions in \cite{LuSp11}.
The conjecture says that in the kinetic scaling limit the function $W$ should 
converge: it is assumed that there exists a limit 
\begin{align}
 W_\tau(\mathbf{k}) := \lim_{\lambda\to 0}W^{\lambda}_{\tau\lambda^{-2}}(\mathbf{k})\, .
\end{align}
In addition, the analysis of the perturbation series suggests that the limit 
satisfies the following homogeneous Boltzmann-Peierls equation:
\begin{align}\label{Boltzmann}
\partial_t W_t(\mathbf{k})= \pazocal{C}(W_t(\cdot))(\mathbf{k})\, ,
\end{align}
with the collision operator 
\begin{align}\label{eq:defCBP}
&\pazocal{C}(W(\cdot))(\mathbf{k})= 
4 \pi \int_{(\T^{d})^3} d\mathbf{k}_1d\mathbf{k}_2 d\mathbf{k}_3 
\delta(\mathbf{k}+\mathbf{k}_1-\mathbf{k}_2-\mathbf{k}_3 )\delta(\omega 
+\omega_1 - \omega_2 - \omega_3)
\nonumber \\ & \quad
 \times 
 [W(\mathbf{k}_1)W(\mathbf{k}_2)W(\mathbf{k}_3) 
 + W(\mathbf{k})W(\mathbf{k}_2)W(\mathbf{k}_3) 
 - W(\mathbf{k})W(\mathbf{k}_1)W(\mathbf{k}_3) - 
W(\mathbf{k})W(\mathbf{k}_1)W(\mathbf{k}_2)] 
\end{align}
where $\omega_i := \FT{\alpha}(\mathbf{k}_i)$ and $\omega := \FT{\alpha}(\mathbf{k})$.

In fact, a lucky accident hides the fact that our present random fields are 
actually ill suited for taking of the scaling limit:
it is clear from the linear part in (\ref{eq:NLStiFT}) that they are
highly oscillatory, and only observables where these
oscillations cancel out, can be hoped to have a (nonzero) limiting value 
in the kinetic scaling limit.
Fortunately, there is a simple ``renormalization'' which cancels these fast 
oscillations.  If we define a new random field by the formula
\begin{eqnarray}\label{eq:defatfield}
a_t(\mathbf{k},\sigma) =\FT{\psi}_t(\mathbf{k},\sigma) \exp 
\bigg(\ci\sigma\int_0^t\! ds\, \omega^\lambda_s(\mathbf{k}) \bigg)\, ,
\label{a_def}
\end{eqnarray}
then it clearly satisfies an equation without a linear term.  Explicitly, then 
\begin{align}
& \partial_t a_t(\mathbf{k}_1,\sigma)
=  - \ci\sigma\lambda \int_{(\T^{d})^3} d\mathbf{k}_2 d\mathbf{k}_3 
d\mathbf{k}_4 
\delta(\mathbf{k}_1 - \mathbf{k}_2 - \mathbf{k}_3 -\mathbf{k}_4)
\nonumber \\ & \qquad \times
\rme^{\ci t(\sigma \omega_1 + \omega_2 -\sigma \omega_3 -\omega_4)}
 \w{a_t(\mathbf{k}_2,-1)a_t(\mathbf{k}_3,\sigma)a_t(\mathbf{k}_4,1)} \, . 
\end{align}
Note that due to the alternating signs, the time dependent terms cancel each other out in the
oscillatory phase term inside the integral.
In fact, the same happens in the second order cumulants, as can be checked by 
using (\ref{eq:kappa2psi}) and the symmetry of $\omega^\lambda_s$: we then find 
that 
\begin{align}\label{eq:kappa2a}
 \kappa(a_t(\mathbf{k}_1,\sigma_1), a_t(\mathbf{k}_2,\sigma_2))
 = \delta(\mathbf{k}_1+\mathbf{k}_2) \cf(\sigma_1+\sigma_2=0) 
W^{\lambda}_t(\sigma_2\mathbf{k}_2)\, .
\end{align}
It is clear that multiplication with a nonrandom term as in (\ref{a_def}) does 
not spoil the gauge invariance of the field so we can rely on it also when working with the cumulants
of the $a$-fields.

We can now study the evolution of $W^{\lambda}_t(\mathbf{k})$ by employing
the expansion given in (\ref{cum_expansion}) to the cumulant
$\kappa(a_t(\mathbf{k}',\sigma'), a_t(\mathbf{k},\sigma))$.   We then find using any $\sigma'=-\sigma$ that
\begin{align}\label{W_expansion}
 & \delta(\mathbf{k}'+\mathbf{k})
(W^{\lambda}_t(\sigma \mathbf{k})-W^{\lambda}_0(\sigma \mathbf{k}))
\nonumber \\ &  
=
-\ci\lambda \sigma \int_0^t ds \int_{(\T^{d})^3} d\mathbf{k}_1 
d\mathbf{k}_2 d\mathbf{k}_3 \delta(\mathbf{k} - \mathbf{k}_1 - \mathbf{k}_2 
-\mathbf{k}_3)\rme^{\ci s(\sigma \omega +\omega_1 - \sigma \omega_2 - 
\omega_3)}\kappa[(a_0)_{I}]  
\nonumber \\ & \quad 
-\ci\lambda \sigma' \int_0^t ds \int_{(\T^{d})^3} d\mathbf{k}_1 d\mathbf{k}_2 
d\mathbf{k}_3 \delta(\mathbf{k}' - \mathbf{k}_1 - \mathbf{k}_2 
-\mathbf{k}_3)\rme^{\ci s(\sigma' \omega' +\omega_1 - \sigma' \omega_2 - 
\omega_3)} \kappa[(a_0)_{I'}]  \nonumber \\
& \quad 
-\lambda^2 \sigma \sum_{\ell \in I} \sigma_{\ell} \int_0^t ds \int_{(\T^{d})^3} 
d\mathbf{k}_1 d\mathbf{k}_2 d\mathbf{k}_3 \delta(\mathbf{k} - \mathbf{k}_1 - 
\mathbf{k}_2 -\mathbf{k}_3)\rme^{\ci s(\sigma \omega +\omega_1 - \sigma \omega_2 
- \omega_3)} \nonumber \\
& \qquad \times \int_0^s ds'   \int_{(\T^{d})^3} d\mathbf{k}_4 d\mathbf{k}_5 
d\mathbf{k}_6 \delta(\mathbf{k}_{\ell} - \mathbf{k}_4 - \mathbf{k}_5 
-\mathbf{k}_6)\rme^{\ci s'(\sigma_{\ell} \omega_{\ell} +\omega_4 - \sigma_{\ell} 
\omega_5 - \omega_6)} \E \big[ \w{a_{s'}^{J_{\ell}}} \w{a_{s'}^{\cancelm{\ell}{I}}} \big] 
\nonumber \\ & \quad  
-\lambda^2 \sigma' \sum_{\ell \in I'} \sigma_{\ell} \int_0^t ds 
\int_{(\T^{d})^3} d\mathbf{k}_1 d\mathbf{k}_2 d\mathbf{k}_3 \delta(\mathbf{k}' - 
\mathbf{k}_1 - \mathbf{k}_2 -\mathbf{k}_3)\rme^{\ci s(\sigma' \omega' +\omega_1 - 
\sigma' \omega_2 - \omega_3)} \nonumber \\
& \qquad \times \int_0^s ds'   \int_{(\T^{d})^3} d\mathbf{k}_4 d\mathbf{k}_5 
d\mathbf{k}_6 \delta(\mathbf{k}_{\ell} - \mathbf{k}_4 - \mathbf{k}_5 
-\mathbf{k}_6)\rme^{\ci s'(\sigma_{\ell} \omega_{\ell} +\omega_4 - \sigma_{\ell} 
\omega_5 - \omega_6)} \E \big[ \w{a_{s'}^{J_{\ell}}} \w{a_{s'}^{\cancelm{\ell}{I'}}} \big] 
\,,
\end{align}
where
\begin{align}
\label{eq:defIseq}
& I=\left((\mathbf{k}_1,-1), (\mathbf{k}_2,\sigma), (\mathbf{k}_3,1), 
(\mathbf{k}',\sigma') \right) \,,\\ 
& I'=\left((\mathbf{k}_1,-1), (\mathbf{k}_2,\sigma'), (\mathbf{k}_3,1), 
(\mathbf{k},\sigma) \right) \,, \\
& J_{\ell}=\left((\mathbf{k}_4,-1), (\mathbf{k}_5,\sigma_{\ell}), (\mathbf{k}_6,1) 
\right) \,.
\end{align}

Following the standard perturbation recipe, we next apply the cumulant hierarchy to the terms depending on $a_{s'}$ in (\ref{W_expansion}).
This results in a sum of two terms: one, in which every $a_{s'}$ has been replaced by $a_0$, plus a ``correction'' which we denote by
$\delta(\mathbf{k}' +  \mathbf{k})\pazocal{R}_3(\sigma\mathbf{k},t)$.
Further iterations of the perturbation expansion and a careful study of the oscillations of the term by term expansion as in \cite{LuSp11}
leads us to the conjecture that $\pazocal{R}_3(\mathbf{k},\tau\lambda^{-2})$ 
should converge in the kinetic scaling limit, as $\lambda\to 0$, at least for sufficiently nondegenerate 
dispersion relations and for large enough dimension $d$.  In addition, the analysis indicates that the limit value is $O(\tau^2)$, which is
negligible compared to the contribution from the other terms following from (\ref{W_expansion}).  
However, the term by term analysis does not suffice to actually prove the claim since the method which was used to rigorously 
control the convergence of the perturbation expansion in \cite{LuSp11} was based on time stationarity of the initial state.
This assumption cannot be made here since we are interested in nontrivial time evolution effects.  
Instead of going into the details of the above argument, we discuss a less technically 
involved motivation for the claim in Section \ref{sec:beyondkin}.

Thus now we need to evaluate expectations of the form $ \E[\w{a_1a_2a_3} \w{a_4a_5a_6}]$ 
where each $a_i$ stands for one of the field variables.
The cumulant expansion in Theorem \ref{th:W_mult_prop} and the vanishing of the third order cumulants imply
\begin{align}\label{cum_exp6}
& \E[\w{a_1a_2a_3}\w{a_4a_5a_6}] 
= \kappa( a_1 , a_4   )\kappa( a_2 , a_5)
\kappa( a_3 , a_6   ) + \kappa( a_1 , a_4 )\kappa( a_2 , a_6 )\kappa( a_3 , a_5 ) 
\nonumber \\ & \quad 
+ \kappa( a_1 , a_5)\kappa( a_2 , a_4)\kappa( a_3 , a_6) + 
\kappa( a_1 , a_5 )\kappa( a_2 , a_6 )\kappa( a_3 , a_4)
\nonumber \\ & \quad 
 + \kappa( a_1 , a_6 )\kappa( a_2 , a_4 )\kappa( a_3 , a_5) + \kappa( a_1 , a_6 
)\kappa( a_2 , a_5 )\kappa( a_3 ,a_4 ) 
\nonumber \\ & \quad 
 + \kappa(a_1,a_2,a_3,a_4,a_5,a_6) + \text{``}9 \times \kappa_2 \kappa_4\text{''}
\end{align}
where the last contribution denotes a sum of the nine terms consisting of a 
product of a second order cumulant and a fourth order cumulant.  Naturally, also some of the above
terms can be zero because of the gauge invariance constraints.

To better work with the expressions arising from (\ref{W_expansion}), let us next introduce a few shorthand notations.  We denote
\begin{eqnarray}
W(\mathbf{k}) &:=& W^{\lambda}_0(\mathbf{k})\, , \nonumber \\
\int d\mathbf{k}_{12 \cdots n}&:=& \int_{(\T^{d})^n} \! d\mathbf{k}_1 d\mathbf{k}_2 \cdots d\mathbf{k}_n 
\, , \nonumber \\
\delta(\mathbf{k} - \mathbf{k}_{ijk}) &:= & \delta(\mathbf{k} - \mathbf{k}_i - 
\mathbf{k}_j -\mathbf{k}_k)\, , \nonumber \\
\Omega^{++}_{--} &:=& \omega(\mathbf{k}_1) + \omega(\mathbf{k}_2) - 
\omega(\mathbf{k}_3) -\omega(\mathbf{k})\, , \nonumber \\
\Omega^{+-}_{-+} &:=& \omega(\mathbf{k}_1) - \omega(\mathbf{k}_2) - 
\omega(\mathbf{k}_3) +\omega(\mathbf{k}) \,.
\end{eqnarray}
We also choose $\sigma=1, \sigma'=-1$ 
and we will only consider the pairing contractions (i.e., the Gaussian 
contractions) in the expansion (\ref{cum_exp6}).  In fact, all terms arising from 
the non-pairing contractions are typically negligible in the kinetic scaling limit 
of the present type.
As an example, in Appendix \ref{sec:nonGauss} we show how the first order 
terms in (\ref{W_expansion}) vanish in the kinetic limit by assuming sufficient regularity 
of the dispersion relation $\omega$ and the $\ell_1$-clustering property of the 
fourth order cumulants.  As explained in \cite{LuSp11},
the contribution from the non-pairing terms in (\ref{cum_exp6}) can be controlled by similar techniques but we will
skip this more involved analysis here.

Hence, after integrating out the variables $\mathbf{k}_i$, $i=4,5,6$, the fourth term in (\ref{W_expansion}) gives
\begin{align}\label{eq:4thterm}
& 2\lambda^2 \delta(\mathbf{k}+\mathbf{k}') \bigg[
\int d\mathbf{k}_{123} \delta(\mathbf{k} - 
\mathbf{k}_{123}){W(\mathbf{k})W(\mathbf{k}_2)W(\mathbf{k}_3)}\int_0^t ds 
\int_0^s ds' \rme^{\ci(s-s')\Omega^{+-}_{-+}} \nonumber \\ 
& \quad - \int d\mathbf{k}_{123} \delta(\mathbf{k} - 
\mathbf{k}_{123}){W(\mathbf{k})W(-\mathbf{k}_1)W(\mathbf{k}_3)}\int_0^t ds 
\int_0^s ds' \rme^{\ci(s-s')\Omega^{+-}_{-+}} \nonumber \\ 
& \quad - \int d\mathbf{k}_{123} \delta(\mathbf{k} - 
\mathbf{k}_{123}){W(\mathbf{k})W(-\mathbf{k}_1)W(\mathbf{k}_2)}\int_0^t ds 
\int_0^s ds' \rme^{\ci(s-s')\Omega^{+-}_{-+}} \nonumber \\ 
& \quad +\int d\mathbf{k}_{123} \delta(\mathbf{k} - \mathbf{k}_{123}) 
{W(-\mathbf{k}_1)W(\mathbf{k}_2)W(\mathbf{k}_3)} \int_0^t ds \int_0^s ds' 
\rme^{\ci(s-s')\Omega^{+-}_{-+}} \bigg] + \NPC
\end{align}
where $\NPC$ stands for "non-pairing contraction terms". 
We proceed in the same way for the fifth term in (\ref{W_expansion}) yielding
\begin{align}\label{eq:5thterm}
& 2\lambda^2 \delta(\mathbf{k}+\mathbf{k}') \bigg[
 \int d\mathbf{k}_{123} \delta(\mathbf{k} + 
\mathbf{k}_{123}){W(\mathbf{k})W(-\mathbf{k}_1)W(-\mathbf{k}_2)}\int_0^t ds 
\int_0^s ds' \rme^{\ci(s-s')\Omega^{++}_{--}}  \nonumber \\ 
& \quad -\int d\mathbf{k}_{123} \delta(\mathbf{k} + \mathbf{k}_{123})
{W(\mathbf{k})W(-\mathbf{k}_1)W(\mathbf{k}_3)}\int_0^t ds 
\int_0^s ds' \rme^{\ci(s-s')\Omega^{++}_{--}}  \nonumber \\ 
& \quad - \int d\mathbf{k}_{123} \delta(\mathbf{k} + 
\mathbf{k}_{123}){W(\mathbf{k})W(-\mathbf{k}_2)W(\mathbf{k}_3)}\int_0^t ds 
\int_0^s ds' \rme^{\ci(s-s')\Omega^{++}_{--}}\nonumber \\ 
& \quad +\int d\mathbf{k}_{123} \delta(\mathbf{k} + \mathbf{k}_{123}) 
{W(-\mathbf{k}_1)W(-\mathbf{k}_2)W(\mathbf{k}_3)} \int_0^t ds \int_0^s ds' 
\rme^{\ci(s-s')\Omega^{++}_{--}} \bigg] +\NPC \,.
\end{align}
By changing integration variables so that $\mathbf{k}_1\to -\mathbf{k}_1$ in (\ref{eq:4thterm}) and 
$\mathbf{k}_1\to -\mathbf{k}_3$, $\mathbf{k}_2\to -\mathbf{k}_2$, $\mathbf{k}_3\to \mathbf{k}_1$ in (\ref{eq:5thterm}), we obtain
\begin{align}
& W^{\lambda}_t(\mathbf{k})- W(\mathbf{k})-(\pazocal{R}_3(\mathbf{k},t)+\NPC) 
\nonumber \\  & \quad 
= 2 \lambda^2 \int_{(\T^{d})^3} 
d\mathbf{k}_{123} \delta(\mathbf{k} + \mathbf{k}_{1} - 
\mathbf{k}_{2}-\mathbf{k}_{3}) \int_0^t ds \int_0^s ds' 
\rme^{\ci(s-s')(\omega_1-\omega_2-\omega_3+\omega)}
\nonumber \\  & \qquad\quad
\times [W(\mathbf{k})W(\mathbf{k}_2)W(\mathbf{k}_3)- 
W(\mathbf{k})W(\mathbf{k}_1)W(\mathbf{k}_3)- 
W(\mathbf{k})W(\mathbf{k}_1)W(\mathbf{k}_2)+ 
W(\mathbf{k}_1)W(\mathbf{k}_2)W(\mathbf{k}_3)]\nonumber \\ 
& \qquad +2 \lambda^2 \int d\mathbf{k}_{123} \delta(\mathbf{k} + 
\mathbf{k}_{1} - \mathbf{k}_{2}-\mathbf{k}_{3}) \int_0^t ds \int_0^s ds' 
\rme^{-\ci(s-s')(\omega_1-\omega_2-\omega_3+\omega)}
\nonumber \\  & \qquad\quad
\times [W(\mathbf{k})W(\mathbf{k}_2)W(\mathbf{k}_3) 
- W(\mathbf{k})W(\mathbf{k}_1)W(\mathbf{k}_3)
- W(\mathbf{k})W(\mathbf{k}_1)W(\mathbf{k}_2)
+ W(\mathbf{k}_1)W(\mathbf{k}_2)W(\mathbf{k}_3) ]
\nonumber \\  & \quad
=  2 \lambda^2 \int d\mathbf{k}_{123} 
\delta(\mathbf{k} + \mathbf{k}_{1} - \mathbf{k}_{2}-\mathbf{k}_{3}) \int_{0}^t 
ds \int_{-s}^s d r\, \rme^{\ci r(\omega_1-\omega_2-\omega_3+\omega)}
\nonumber \\  & \qquad
 \times [
W(\mathbf{k}_1)W(\mathbf{k}_2)W(\mathbf{k}_3) 
+ W(\mathbf{k})W(\mathbf{k}_2)W(\mathbf{k}_3) - 
W(\mathbf{k})W(\mathbf{k}_1)W(\mathbf{k}_3) - 
W(\mathbf{k})W(\mathbf{k}_1)W(\mathbf{k}_2)] 
\label{final_sec_ord} \,.
\end{align}
Note that for any $\Omega \in \R$ we have 
$\lambda^2\int_0^t ds \int_{|r|\le s} d r\, \rme^{\ci r\Omega} =
 \int_{|r|\le t} d r\, \rme^{\ci r\Omega} (\lambda^2 t- \lambda^2 |r|)$.  By setting $t=\tau \lambda^{-2}$ and taking
$\lambda \to 0$, this expression formally converges to 
$\tau \int_{-\infty}^\infty d r\, \rme^{\ci r\Omega} = \tau 2 \pi \delta(\Omega) $.  Therefore, doing this 
in (\ref{final_sec_ord}) yields the conjecture that 
\begin{align}\label{eq:Wtau}
& {W}_{\tau}(\mathbf{k}) - W(\mathbf{k}) -O(\tau^2)  = 
%\lim_{\lambda \to \infty} W^{(\lambda)}_{\tau \lambda^{-2}}(\mathbf{k}) \nonumber \\ & \quad = W(\mathbf{k}) + 
 \tau  4 \pi \int\!  d\mathbf{k}_{123} 
\delta(\mathbf{k} + \mathbf{k}_{1} - \mathbf{k}_{2}-\mathbf{k}_{3}) 
\delta(\omega + \omega_{1} - \omega_{2}-\omega_{3}) \nonumber \\ 
& \quad  \times [
W(\mathbf{k}_1)W(\mathbf{k}_2)W(\mathbf{k}_3) 
+ W(\mathbf{k})W(\mathbf{k}_2)W(\mathbf{k}_3) - 
W(\mathbf{k})W(\mathbf{k}_1)W(\mathbf{k}_3) - 
W(\mathbf{k})W(\mathbf{k}_1)W(\mathbf{k}_2)] \, .
\end{align}
Since here $W(\mathbf{k})=W_0(\mathbf{k})$, if we divide the left hand side by $\tau$ and then take $\tau\to 0$, it converges to 
$\partial_{\tau} {W}_{\tau}(\mathbf{k})$ at $\tau=0$.  Dividing the right hand side of (\ref{eq:Wtau}) by $\tau$ yields
$\pazocal{C}(W_0(\cdot))(\mathbf{k})$, as defined in (\ref{eq:defCBP}).  Therefore, the Boltzmann-Peierls equation 
(\ref{Boltzmann}) should hold at $\tau=0$.  Assuming that the state of the original 
system remains so regular that the estimates leading to the conjecture continue to hold, we thus find that the Boltzmann-Peierls equation
should be valid for the limit of $W^{\lambda}_{\tau\lambda^{-2}}$ also at later times $\tau$, as was claimed in the beginning 
of the subsection.

\subsubsection{Decay of field time-correlations}\label{sec:timecorr}

As a second example of how the standard perturbation expansion 
works for the cumulants,
we consider a kinetic scaling limit of time-correlations.  
In particular, our goal is to show how the main results proven in \cite{LuSp11}
relate to the present cumulant hierarchy expansions.  

The notation 
``$\hat{a}_t$'' was used in \cite{LuSp11} to define the finite periodic lattice analogue of the present 
$a_t$-field. (One can compare the definition of ``$\hat{a}_t$''
in (3.9) and its evolution equation in (3.10) in \cite{LuSp11} to those given for $a_t$ here.)
Translated to the present infinite lattice setup,  the main theorem of \cite{LuSp11} (Theorem 2.4)
leads to the following conjecture about the  decay of time correlations of $a_t$:
start the system from an $\ell_1$-clustering equilibrium Gibbs state.
Then there is a continuous function $A^\lambda_t(\mathbf{k})$ such that 
$\E[a_0(\mathbf{k}',-1)a_t(\mathbf{k},1)]=\delta(\mathbf{k}'+\mathbf{k})A^\lambda_t(\mathbf{k})$.
The conjecture is that the kinetic scaling limit of $A^\lambda$ exists and its decay is governed by 
the ``loss term'' of the Boltzmann-Peierls equation (\ref{Boltzmann}) evaluated at the corresponding limit equilibrium
covariance function $\Weql(\mathbf{k})=\beta^{-1}/(\omega(\mathbf{k})-\mu)$ where $\beta>0$ and $\mu\in \R$ are parameters determined 
by the equilibrium state.  (Such functions $\Weql$ are indeed stationary solutions of (\ref{Boltzmann}).)
More precisely,  \cite[Theorem 2.4]{LuSp11} is consistent with the conjecture that
\begin{align}\label{eq:Aeqlconj}
& \lim_{\lambda\to 0} A^\lambda_{\tau\lambda^{-2}}(\mathbf{k}) = \Weql(\mathbf{k}) \rme^{-\tau \Gamma(\Weql(\cdot))(\mathbf{k})}\, ,
\end{align}
where
\begin{align}\label{eq:defGamma}
& \Gamma(W(\cdot))(\mathbf{k}) = -2 \int_0^\infty \!\rmd r
\int_{(\T^{d})^3} d\mathbf{k}_1d\mathbf{k}_2 d\mathbf{k}_3 \delta(\mathbf{k} + \mathbf{k}_{1} - \mathbf{k}_{2}-\mathbf{k}_{3}) 
 \rme^{\ci r(\omega_1-\omega_2-\omega_3+\omega)} \nonumber \\ 
& \quad  \times [
W(\mathbf{k}_2)W(\mathbf{k}_3) 
-W(\mathbf{k}_1)W(\mathbf{k}_3) 
-W(\mathbf{k}_1)W(\mathbf{k}_2)] \, .
\end{align}

Instead of assuming that the system starts from an equilibrium state, let us consider more general 
states which we assume to be gauge and translation invariant and $\ell_1$-clustering.  
We can immediately use the results derived in Section \ref{sec:cwdyn} if we consider the ``$a_0$'' term to be a new field which
has trivial time evolution with zero amplitudes, i.e., the corresponding $\mathcal{I}_j$-set is empty.  The net effect of this change 
is that more than half of the terms analyzed in the previous section will be absent.  For instance,
the expansion of the time correlation using (\ref{wick_evolution1}) reads  
\begin{align}\label{time_corr_evol}
& \kappa[a_0(\mathbf{k}',-1), a_t(\mathbf{k},1)]
 \nonumber \\  & \quad 
 = \kappa[a_0(\mathbf{k}',-1), a_0(\mathbf{k},1)] 
 - \ci\lambda \int_0^t ds\int \! d\mathbf{k}_{123} \delta(\mathbf{k} - \mathbf{k}_{123})\rme^{\ci s(\omega + 
\omega_1 -\omega_2 -\omega_3)} \kappa[(a_0)_{I'+(\mathbf{k}',-1)}] \nonumber \\ 
& \qquad
-\lambda^2 \sum_{\ell \in I'} \sigma_\ell \int_0^t ds \int  
d\mathbf{k}_1 d\mathbf{k}_2 d\mathbf{k}_3 \delta(\mathbf{k} - \mathbf{k}_1 - 
\mathbf{k}_2 -\mathbf{k}_3)\rme^{\ci s(\omega +\omega_1 - \omega_2 
- \omega_3)} \nonumber \\ 
& \qquad \quad\times \int_0^s ds'   \int  d\mathbf{k}_4 d\mathbf{k}_5 
d\mathbf{k}_6 \delta(\mathbf{k}_{\ell} - \mathbf{k}_4 - \mathbf{k}_5 
-\mathbf{k}_6)\rme^{\ci s'(\sigma_\ell \omega_\ell +\omega_4 - \sigma_\ell \omega_5 - 
\omega_6)} \E \big[ \w{a_{s'}^{J_{\ell}}} \w{a_0(\mathbf{k}',-1) a_{s'}^{\cancelm{\ell}{I'} } } \big] 
\end{align}
where $I'=((\mathbf{k}_1,-1), (\mathbf{k}_2,1), (\mathbf{k}_3,1) )$ and 
$J_{\ell}=((\mathbf{k}_4,-1), (\mathbf{k}_5,\sigma_{\ell}), (\mathbf{k}_6,1)) $.

As in Section \ref{derivation_boltzmann}, we now assume that only pairings contribute in the kinetic scaling limit.
When applying (\ref{cum_exp6}) to expand $\E[ \w{a_{s'}^{J_{\ell}}} \w{a_0(\mathbf{k}',-1) a_{s'}^{\cancelm{\ell}{I'} } } ]$,
we note that every pairing term results in a product containing a factor $A^\lambda_{s'}(-\mathbf{k}')$ and a product of two $W_{s'}^\lambda$-terms.
The rest of the structure is identical to the one considered earlier, some of the terms are merely missing now.
We then use the perturbation expansion to the product once more.  This produces a term where $s'$ is set to $0$, and a remainder which we
assume to be negligible as before.  The rest of the computation is essentially the same as in Section \ref{derivation_boltzmann},
yielding
\begin{align}\label{half_boltzmann}
& A^\lambda_{\tau\lambda^{-2}}(\mathbf{k})- A^\lambda_0(\mathbf{k})- (\text{terms higher order in }\lambda\text{ or }\tau)
\nonumber \\ & \quad 
= 2 \lambda^2 \int_{(\T^{d})^3} 
d\mathbf{k}_{123} \delta(\mathbf{k} + \mathbf{k}_{1} - 
\mathbf{k}_{2}-\mathbf{k}_{3}) 
\int_0^{\tau\lambda^{-2}}\!\! ds \int_0^s d r\, \rme^{\ci r(\omega_1-\omega_2-\omega_3+\omega)}
\nonumber \\  & \qquad
\times [A^\lambda_0(\mathbf{k})W^\lambda_0(\mathbf{k}_2)W^\lambda_0(\mathbf{k}_3)- 
A^\lambda_0(\mathbf{k})W^\lambda_0(\mathbf{k}_1)W^\lambda_0(\mathbf{k}_3)- 
A^\lambda_0(\mathbf{k})W^\lambda_0(\mathbf{k}_1)W^\lambda_0(\mathbf{k}_2)]\,.
\end{align}
Hence, if we divide the equation by $\tau$ and then 
take $\lambda\to 0$, followed by $\tau\to 0$, we find that $A_\tau(\mathbf{k}) :=\lim_{\lambda\to 0} A^\lambda_{\tau\lambda^{-2}}(\mathbf{k})$
should satisfy at $\tau=0$
\begin{align}\label{eq:Aevol}
& \partial_\tau A_\tau(\mathbf{k}) = -A_\tau(\mathbf{k}) \Gamma(W_\tau(\cdot))(\mathbf{k})\, ,
\end{align}
where $\Gamma$ has been defined in (\ref{eq:defGamma}).  As before, the conjecture is that this equation continues to hold for other values $\tau>0$, as well.

Once $W_\tau$ is given, equation (\ref{eq:Aevol}) is straightforward to solve.  Since $A^\lambda_0(\mathbf{k})=W_0(\mathbf{k})$, 
the solution reads
\begin{align}\label{eq:Atausol}
& A_\tau(\mathbf{k}) = W_0(\mathbf{k}) \rme^{-\int_0^\tau\!\rmd s\, \Gamma(W_s(\cdot))(\mathbf{k})}\, .
\end{align}
If the system is started in an equilibrium state with $W_0=\Weql$, we have $W_s=\Weql$ for all $s$.
Thus (\ref{eq:Atausol}) implies (\ref{eq:Aeqlconj}) in this special case.

\section{Discussion about further applications}\label{sec:discussion}

\subsection{Limitations of the direct renormalization procedure: inhomogeneous DNLS}\label{sec:inhomog}

The field renormalization used with the translational invariant data
greatly simplified the evolution equation by removing the linear term.  
The renormalization procedure, given in (\ref{eq:defatfield}), was a simple multiplication by a 
time and $k$-dependent function and the time-dependent first order terms had no effect in the interaction term.
Unfortunately, this case is atypical: most commonly, the necessary renormalization is not a multiplication operator
and the first order terms will also affect the oscillatory phase terms arising from the harmonic evolution.
In fact, this happens also for the DNLS model as soon as we drop the requirement that the initial data is translation 
invariant.  To explain the changes needed in the renormalization procedure, we discuss in this subsection the DNLS model with
inhomogeneous initial data in some more detail.

Before considering the inhomogeneous case, let us begin with an example which 
emphasizes the importance of the field renormalization even for translation invariant initial data
if one considers taking kinetic scaling limits of all field observables.  We
inspect the time correlation of the ``bare'' $ \FT{\psi}$-fields, i.e., 
$\E[\FT{\psi}_0(\mathbf{k}',-1)\FT{\psi}_t(\mathbf{k},1)]$ 
assuming spatially homogeneous, gauge invariant initial data. Then there exists a function $ \Psi_t(\mathbf{k})$ such that  
$ \E[\FT{\psi}_0(\mathbf{k}',-1)\FT{\psi}_t(\mathbf{k},1)]= 
\delta(\mathbf{k}'+\mathbf{k})\Psi_t(\mathbf{k})$.  
By using (\ref{eq:NLStiFT}) in (\ref{cum_evolution}), we find the following evolution equation for $\Psi_t$:
\begin{align}\label{psi_first_order}
& 
\delta(\mathbf{k}' + \mathbf{k})\Psi_t(\mathbf{k})
= \kappa(\FT{\psi}_0(\mathbf{k}',-1),\FT{\psi}_t(\mathbf{k},1)) 
=\delta(\mathbf{k}' + \mathbf{k})\Psi_0(\mathbf{k})
   -\ci  \delta(\mathbf{k}'+\mathbf{k})
   \int_0^t\!ds\,\omega_s^{\lambda}(\mathbf{k})\Psi_s(\mathbf{k}) \nonumber 
\\ \quad
&- \ci \lambda \int_0^t ds\int_{\T^{3d}} d\mathbf{k}_2 d\mathbf{k}_3 
d\mathbf{k}_4 \delta(\mathbf{k} - \mathbf{k}_1 - \mathbf{k}_2 -\mathbf{k}_3) 
\E[\w{\FT{\psi}_0(\mathbf{k}',-1)} 
\w{\FT{\psi}_t(\mathbf{k}_1,-1)\FT{\psi}_t(\mathbf{k}_2,1)\FT{\psi}
_t(\mathbf{k}_3,1)}]\,.
\end{align}

The linear equation associated to (\ref{psi_first_order}) thus has the form
\begin{equation}
f_t(\mathbf{k})= f_0(\mathbf{k})
   -\ci \int_0^t\!ds\,\omega_s^{\lambda}(\mathbf{k})f_s(\mathbf{k}) \, ,
\label{renormalizanda_eq}
\end{equation}
which is solved by
$f_t(\mathbf{k})=U_t(\mathbf{k})f_0(\mathbf{k})$
where $ U_t(\mathbf{k})=\exp(-\ci\int_0^tds\, \omega^{\lambda}_s(\mathbf{k}))$. We recall that $ 
\omega^{\lambda}_s(\mathbf{k}) = \omega(\mathbf{k}) + \lambda R_s$, and thus
at a kinetic time scale, with $t=\tau \lambda^{-2}$, we have
$\int_0^t\! ds\, \omega^{\lambda}_s(\mathbf{k}) = \tau \lambda^{-2} \omega(\mathbf{k}) + O(\lambda^{-1})$.
Therefore, $U_t$ has unbounded oscillations in the kinetic scaling limit.  Also, we find that even though the effect of
the first order term proportional to $R_s$ is subdominant, it is still rapidly oscillating on the kinetic time-scale and should not
be ``expanded'' in any perturbative treatment of the problem.

We could solve the problem with these unbounded oscillations by considering instead of 
$\FT{\psi}_t$ the renormalized field $a_t=U_t^{-1}\FT{\psi}_t$.
In fact, by the results of Section \ref{sec:timecorr} and using $a_0=\FT{\psi}_0$ we find that
\begin{align}
& \delta(\mathbf{k}+\mathbf{k}')A^\lambda_t(\mathbf{k})
=
\E[a_0(\mathbf{k}',-1)a_t(\mathbf{k},1)] 
= U_t(\mathbf{k}) \E\!\left[\FT{\psi}_0(\mathbf{k}',-1)\FT{\psi}_t(\mathbf{k},1)\right]
= \delta(\mathbf{k}+\mathbf{k}') U_t(\mathbf{k})  \Psi_t(\mathbf{k}) \, .
\end{align}
We have argued in Section \ref{sec:timecorr} that the kinetic scaling limit of $A^\lambda_t$ exists.
Then $\Psi_{\tau \lambda^{-2}}(\mathbf{k})$ \defem{cannot} have a convergent limit as $\lambda\to 0$;
instead, it has fast oscillations proportional to $U^{-1}_{\tau \lambda^{-2}}(\mathbf{k})$.
Let us also once more stress that the ``zeroth order renormalization'', i.e., countering the free evolution,
does not remove all of the unbounded oscillations but still leaves those resulting from the $R_t$ term.

However, the above renormalization procedure cannot be straightforwardly extended to more complicated cases.
Consider next the DNLS model with inhomogeneous initial data.  As in section 
\ref{derivation_boltzmann}, our goal is to find the right observable which 
satisfies the Boltzmann equation in the kinetic scaling limit. The evolution equation 
for the bare field $ \FT{\psi}$ reads
\begin{align}\label{psi_inhomo}
&\partial_t \FT{\psi}_t(\mathbf{k},\sigma)= -\ci\sigma 
\omega(\mathbf{k})\FT{\psi}_t(\mathbf{k},\sigma) \nonumber \\ 
& \quad -2\ci\sigma\lambda \int_{(\T^{d})^3} d\mathbf{k}_1 d\mathbf{k}_2 d\mathbf{k}_3 
\delta(\mathbf{k} - \mathbf{k}_1 - \mathbf{k}_2 -\mathbf{k}_3)
\kappa(\FT{\psi}_t(\mathbf{k}_1,-1), \FT{\psi}_t(\mathbf{k}_3,1)) 
\FT{\psi}_t(\mathbf{k}_2,\sigma)
\nonumber \\ & \quad 
-\ci\sigma\lambda\int_{(\T^{d})^3} d\mathbf{k}_1 d\mathbf{k}_2 d\mathbf{k}_3 
\delta(\mathbf{k} - \mathbf{k}_1 - \mathbf{k}_2 -\mathbf{k}_3)
\w{\FT{\psi}_t(\mathbf{k}_1,-1)\FT{\psi}_t(\mathbf{k}_2,\sigma)\FT{\psi}
_t(\mathbf{k}_3,1)} \, .
\end{align}
By following the same strategy as for the homogeneous case, let be 
$U_t$, $t\ge 0$, denote the family of linear operators which solves the linear 
part of 
(\ref{psi_inhomo}): we suppose that it solves the operator equation
$\partial_t U_t = - \ci H_t  U_t$ where 
$(H_t f)(\mathbf{k},\sigma) = \sigma \omega(\mathbf{k}) f(\mathbf{k},\sigma)
 +2\sigma\lambda \int_{\T^{d}} d\mathbf{k}' K_t(\mathbf{k} -\mathbf{k}') 
f(\mathbf{k}',\sigma)$.
The time-dependent convolution kernel $K_t(\mathbf{k})$ should be equal 
to 
$\int_{\T^{d}} d\mathbf{k}_1 \kappa(\FT{\psi}_t(\mathbf{k}_1,-1), 
\FT{\psi}_t(\mathbf{k}- \mathbf{k}_1,1))$.  
This can be done either by first solving the implicit equation 
for the above integral over the cumulant, or by leaving $K_t$ arbitrary and 
fixing it by some minimization procedure at the end.  
If such a family $U_t$ can be found, we may define as before 
$\tilde{a}_t(\mathbf{k}, \sigma)=  (U_{t}^{-1}\FT{\psi}_t)(\mathbf{k},\sigma)$ 
and, since 
$\partial_t U_{t}^{-1}= -U_{t}^{-1}( \partial_t U_{t} )U_{t}^{-1}=\ci U_{t}^{-1} 
H_t$, 
it then satisfies an evolution equation 
\begin{align}\label{at_inhomo}
&\partial_t \tilde{a}_t (\mathbf{k},\sigma) =
%\nonumber \\ & \quad  
-\ci\sigma\lambda \int_{(\T^{d})^3} d^3\mathbf{k} \int_{\T^{d}} 
d\mathbf{k}''\int_{(\T^{d})^3} d^3\mathbf{k}'
\delta(\mathbf{k}'' - \mathbf{k}'_1 - \mathbf{k}'_2 -\mathbf{k}'_3)
\nonumber \\ & \qquad \times
\tilde{u}_t(\mathbf{k},\mathbf{k}'',\sigma)
u_t(\mathbf{k}'_1,\mathbf{k}_1,-1)
u_t(\mathbf{k}'_2,\mathbf{k}_2,\sigma)
u_t(\mathbf{k}'_3,\mathbf{k}_3,1)
\w{\,\tilde{a}_t(\mathbf{k}_1,-1)\tilde{a}_t(\mathbf{k}_2,\sigma)\tilde{a}
_t(\mathbf{k}_3,1)}
\, ,
\end{align}
where $u_t(\mathbf{k},\mathbf{k}',\sigma)$ and 
$\tilde{u}_t(\mathbf{k},\mathbf{k}',\sigma)$ 
denote the formal integral kernels of the operators $U_t$ and $U_t^{-1}$, 
respectively.
(Note that the operators $U_t$ are diagonal in $\sigma$ but not any more in $ 
\mathbf{k}$.)

Apart from some special cases it seems difficult to gain sufficient control over the 
operators 
$U_t$ to consider taking a kinetic limit using the observables $\tilde{a}_t$, 
unlike with the explicit phases factors which appeared in the spatially homogeneous 
case.
Even though $U_t$ approach the same multiplication operator as before when 
$\lambda \to 0$ for a fixed $t$, it is not clear 
that the corrections do not contribute in the limit, since we need to consider 
$t=O(\lambda^{-2})$.
Thus, although the cumulant expansion of $\tilde{a}_t$-fields  
is simpler than that of $\psi_t$-fields, to control the kinetic scaling limit 
looks
intractable.  Hence, new approaches for the study of the kinetic time scales are
called for.

\subsection{Kinetic theory beyond kinetic time-scales?}\label{sec:beyondkin}

In this section we propose a new approach to the problem when a 
renormalization scheme with a convergent kinetic scaling limit cannot be found 
or controlled.  The approach does not require taking $\lambda\to 0$, and, if 
successful, it may also yield estimates which are valid beyond the standard
kinetic time scales which was $O(\lambda^{-2})$ in the above DNLS case.  

The main idea can be summarized in the following simple
observation.  Suppose $f_t$ is a solution to the equation
\begin{align}\label{eq:startf}
 f_t = f_0 + R_t + \int_0^t \! \rmd s\, F_{s,t}[f_s]
\end{align}
where $R_t=O(\vep)$ uniformly in $t$ and 
$F_{s,t}$, $0\le s\le t$, is an explicit, but possibly nonlinear functional of 
$f_s$.
Suppose furthermore that there is another, ``simpler'', functional $\Phi_{s,t}$ 
such that $F_{s,t} = \Phi_{s,t} + O(\vep (1+|t-s|)^{-p})$ with $p>1$;
by simpler we mean that the evolution problem
\begin{align}\label{eq:newg}
 \varphi_t = S_t + \int_0^t \! \rmd s\, \Phi_{s,t}[\varphi_s] \, ,
\end{align}
for any bounded ``source term'' $S_t$, is easier to study than (\ref{eq:startf}).
Under these assumptions, any solution to (\ref{eq:startf}) satisfies
\begin{align}\label{eq:newf}
 f_t = f_0 + \rho_t + \int_0^t \! \rmd s\, \Phi_{s,t}[f_s] \, , 
\end{align}
where $\rho_t := R_t + \int_0^t \! \rmd s\, (F_{s,t}[f_s]-\Phi_{s,t}[f_s]) $ is 
$O(\vep)$ uniformly in $t$.
Therefore, if we could prove that (\ref{eq:newg}) is stable under 
perturbations of the source term $S_t$, we may conclude that 
the solution $\varphi_t$ to 
\begin{align}%\label{eq:newg}
 \varphi_t = f_0 + \int_0^t \! \rmd s\, \Phi_{s,t}[\varphi_s] 
\end{align}
then approximates the ``true'' solution $f_t$ with an error which is $O(\vep)$ 
\defem{uniformly in time}.

To have a concrete example, consider again the DNLS with gauge invariant initial data.  
We sketch below two conjectures about this system.  
The details of the conjectures should not be taken too seriously: they should be considered more as
examples of what could happen in general rather than as specific conjectures about the behavior of 
the DNLS system.  For this reason, the discussion will be kept on a very loose level; in particular, we do
not wish to make any specific claims about what kind of metrics should be used for studying the uniform boundedness in time.

Let us begin with the case of inhomogeneous, gauge invariant initial data.  
We move to slowly varying fields by cancelling the free evolution term.  This 
renormalization leads to equations which are
almost identical to those in the homogeneous case.  Namely, for the field
$b_t(\mathbf{k},\sigma):=e^{\ci\sigma\omega(\mathbf{k})t}\hat{\psi}_t(\mathbf{k},\sigma)$ 
we have
\begin{align}\label{eq:btevoleq}
&\partial_t b_t(\mathbf{k},\sigma)= 
-2\ci\sigma\lambda \int_{(\T^{d})^3} d\mathbf{k}_1 d\mathbf{k}_2 d\mathbf{k}_3 
\delta(\mathbf{k} - \mathbf{k}_1 - \mathbf{k}_2 -\mathbf{k}_3)
\rme^{\ci t(\sigma\omega + \omega_1 -\sigma\omega_2 -\omega_3)} 
%\nonumber \\ & \qquad \times 
B_t(\mathbf{k}_1,\mathbf{k}_3) \w{b_t(\mathbf{k}_2,\sigma)}
\nonumber \\  & \quad 
-\ci\sigma\lambda\int_{(\T^{d})^3} d\mathbf{k}_1 d\mathbf{k}_2 d\mathbf{k}_3 
\delta(\mathbf{k} - \mathbf{k}_1 - \mathbf{k}_2 -\mathbf{k}_3)
\rme^{\ci t(\sigma \omega + \omega_1 -\sigma \omega_2 -\omega_3)} 
\w{b_t(\mathbf{k}_1,-1)b_t(\mathbf{k}_2,\sigma)b_t(\mathbf{k}_3,1)} \,
\end{align}
where $B_t(\mathbf{k}',\mathbf{k})=\kappa(b_t(\mathbf{k}',-1), 
b_t(\mathbf{k},1))$.  By the gauge invariance, 
$\kappa(b_t(\mathbf{k}',\sigma'), b_t(\mathbf{k},\sigma))=0$, unless 
$\sigma'+\sigma=0$.  Hence, there is only one other
nonzero second order cumulant,
$\kappa(b_t(\mathbf{k}',1), 
b_t(\mathbf{k},-1))=B_t(\mathbf{k},\mathbf{k}')=B_t(-\mathbf{k}',-\mathbf{k}
)^*$.  

To study the fourth order cumulants, it suffices to concentrate on the function
$$
D_t(k) := 
\kappa(b_t(\mathbf{k}_1,-1),b_t(\mathbf{k}_2,-1),b_t(\mathbf{k}_3,1),b_t(\mathbf
{k}_4,1)) \,, \, k\in (\T^d)^4 \,. 
$$
Then the first two equations in the cumulant hierarchy are equivalent to
\begin{align}\label{eq:inhomogBt}
& \partial_t B_t(\mathbf{k}',\mathbf{k}) = 
-2\ci\lambda \int_{(\T^{d})^3} d\mathbf{k}_1 d\mathbf{k}_2 d\mathbf{k}_3 
\delta(\mathbf{k} - \mathbf{k}_1 - \mathbf{k}_2 -\mathbf{k}_3)
\rme^{\ci t(\omega + \omega_1 -\omega_2 -\omega_3)} 
B_t(\mathbf{k}_1,\mathbf{k}_3) B_t(\mathbf{k}_2,\mathbf{k}')
 \nonumber \\ & \quad 
+2\ci\lambda \int_{(\T^{d})^3} d\mathbf{k}_1 d\mathbf{k}_2 d\mathbf{k}_3 
\delta(\mathbf{k}' - \mathbf{k}_1 - \mathbf{k}_2 -\mathbf{k}_3)
\rme^{\ci t(-\omega' + \omega_1 +\omega_2 -\omega_3)} 
B_t(\mathbf{k}_1,\mathbf{k}_3) B_t(\mathbf{k},\mathbf{k}_2)
 \nonumber \\ & \quad 
-\ci\lambda\int_{(\T^{d})^3} d\mathbf{k}_1 d\mathbf{k}_2 d\mathbf{k}_3 
\delta(\mathbf{k} - \mathbf{k}_1 - \mathbf{k}_2 -\mathbf{k}_3)
\rme^{\ci t(\omega + \omega_1 -\omega_2 -\omega_3)} 
D_t(\mathbf{k}',\mathbf{k}_1,\mathbf{k}_2,\mathbf{k}_3)
 \nonumber \\ & \quad 
+\ci\lambda\int_{(\T^{d})^3} d\mathbf{k}_1 d\mathbf{k}_2 d\mathbf{k}_3 
\delta(\mathbf{k}' - \mathbf{k}_1 - \mathbf{k}_2 -\mathbf{k}_3)
\rme^{\ci t(-\omega' + \omega_1 + \omega_2 -\omega_3)} 
D_t(\mathbf{k}_1,\mathbf{k}_2,\mathbf{k}_3,\mathbf{k}) \, ,
\\
& \partial_t D_t(k) = 
 -\ci 2 \lambda \sum_{\ell=1}^4 \sigma_\ell
 \int_{(\T^{d})^3} d\mathbf{k}'_1 d\mathbf{k}'_2 d\mathbf{k}'_3 
\delta(\mathbf{k}_\ell - \mathbf{k}'_1 - \mathbf{k}'_2 -\mathbf{k}'_3)
\rme^{\ci t(\sigma_\ell\omega_\ell + \omega'_1 -\sigma_\ell\omega'_2 
-\omega'_3)}
 \nonumber \\ & \qquad \times
B_t(\mathbf{k}'_1,\mathbf{k}'_3) D_t(\text{``replace }\mathbf{k}_\ell\text{ by } 
\mathbf{k}'_2\text{''})
 \nonumber \\ & \quad 
 -\ci \lambda \sum_{\ell=1}^4 \sigma_\ell
 \int_{(\T^{d})^3} d\mathbf{k}'_1 d\mathbf{k}'_2 d\mathbf{k}'_3 
\delta(\mathbf{k}_\ell - \mathbf{k}'_1 - \mathbf{k}'_2 -\mathbf{k}'_3)
\rme^{\ci t(\sigma_\ell\omega_\ell + \omega'_1 -\sigma_\ell\omega'_2 
-\omega'_3)}
 \nonumber \\ & \qquad \times
 (\text{``}(1\times \kappa_6) + (5\times B_t D_t) + (6\times B_t B_t 
B_t)\text{''})
 \label{eq:inhomogGt}
\, ,
\end{align}
where in the second formula, $\sigma := (-1,-1,1,1)$ and
on the last line we have applied (\ref{cum_exp6}) and 
merely denoted how many nonzero terms each type of 
partition can have.
Let us point out that
the first two terms in (\ref{eq:inhomogBt}) cancel each other out if the state 
is spatially homogeneous, since then 
$B_t(\mathbf{k}',\mathbf{k}) \propto \delta(\mathbf{k}'+\mathbf{k})$.  For an 
inhomogeneous state, however, the cancellation need not be exact, and since
this term is then not $O(\lambda^2)$, it will likely prevent taking of the kinetic scaling 
limit of $B_t$ directly.

If we now integrate the $B$ equation as in (\ref{cum_expansion}), we find that
\begin{align}\label{eq:Btfull}
& B_t(\mathbf{k}',\mathbf{k}) = B_0(\mathbf{k}',\mathbf{k})-\ci \lambda 
\int_0^t\,\rmd s\, 
 \int_{(\T^{d})^3} d\mathbf{k}_1 d\mathbf{k}_2 d\mathbf{k}_3 \delta(\mathbf{k} - 
\mathbf{k}_1 - \mathbf{k}_2 -\mathbf{k}_3)
\rme^{\ci s(\omega + \omega_1 -\omega_2 -\omega_3)} 
 \nonumber \\ & \qquad \times
\left( 2 B_s(\mathbf{k}_1,\mathbf{k}_3) B_s(\mathbf{k}_2,\mathbf{k}') 
+D_0(\mathbf{k}',\mathbf{k}_1,\mathbf{k}_2,\mathbf{k}_3)\right)
 \nonumber \\ & \quad 
+\ci \lambda \int_0^t\,\rmd s\,
\int_{(\T^{d})^3} d\mathbf{k}_1 d\mathbf{k}_2 d\mathbf{k}_3 \delta(\mathbf{k}' - 
\mathbf{k}_1 - \mathbf{k}_2 -\mathbf{k}_3)
\rme^{\ci s(-\omega' + \omega_1 +\omega_2 -\omega_3)} 
 \nonumber \\ & \qquad \times
\left( 2 B_s(\mathbf{k}_1,\mathbf{k}_3) B_s(\mathbf{k},\mathbf{k}_2) +
D_0(\mathbf{k}_1,\mathbf{k}_2,\mathbf{k}_3,\mathbf{k})\right)
 \nonumber \\ & \quad 
+ \lambda^2 \int_0^t\,\rmd s' \int_{s'}^t\,\rmd s\, \sum \left(\ \cdots\right)
\, .
\end{align}
In the final sum, each term depends on $s$ only via the oscillatory phases.  
These can be collected together and they have a structure
\begin{align}
  \rme^{\ci s(\sigma_0(\omega_0-\omega_2) + \omega_1 -\omega_3)} 
 \rme^{\ci s' (\tilde\sigma_\ell\tilde\omega_\ell + \omega'_1 
-\tilde\sigma_\ell\omega'_2 -\omega'_3)} \, ,
\end{align}
where $(\sigma_0,\mathbf{k}_0)$ is equal to $(1,\mathbf{k})$, if the term arises 
from the second  last term in (\ref{eq:inhomogBt}) and 
it is equal $(-1,\mathbf{k}')$ if it arises from the last term.  The pair
$(\tilde\sigma_\ell,\tilde\omega_\ell)$ comes from arguments of the 
corresponding ``$D_t$-term'' and thus depends also on this choice.  Therefore, 
the $s$-integral over the oscillatory phase 
can be computed explicitly and to each of the terms in the sum it will produce a 
factor
\begin{align} \label{eq:integratephase}
 \rme^{\ci s' (\sigma_0(\omega_0-\omega_2) + \omega_1 
-\omega_3+\tilde\sigma_\ell\tilde\omega_\ell + \omega'_1 -\sigma_\ell\omega'_2 
-\omega'_3)}
\int_0^{t-s'}\!\!\rmd r \,  \rme^{\ci r (\sigma_0(\omega_0-\omega_2) + \omega_1 
-\omega_3)} 
 \, .
\end{align}

It is difficult to go further in the analysis of the oscillatory phases 
without resorting to graph theory,  and we will not pursue it here.
However, already the simple example given in Appendix \ref{sec:nonGauss} shows 
that, if the state is $\ell_1$-clustering, the 
oscillations may result in time-integrals which are absolutely convergent over 
$[0,\infty)$.  For instance, this explicit example implies that
if the initial state is homogeneous and $\ell_1$-clustering, then the two terms 
depending on $D_0$ in (\ref{eq:Btfull}) are $O(\lambda)$
\defem{uniformly in time}.  If we assume that similar bounds are valid for inhomogeneous
states and every term containing $\kappa_6$, the two-component field $f_t=(B_t,D_t)$
behaves as the model considered in the 
beginning of the section.  

Therefore, assuming uniform boundedness of $\kappa_6$ allows using the principles
described in the beginning of this section and
results in a conjecture about the evolution of the cumulants.
Suppose that there is a metric for the cumulants, similar or given by 
$\ell_1$-clustering estimates, such that the following results hold for
sufficiently regular dispersion relations $\omega$ and initial data:
\begin{enumerate}
 \item Suppose that the contribution from $\kappa_6(t)$ is uniformly bounded in time, with a bound $O(\lambda^{1-q})$ where $0\le q\le 1$.
 \item Consider the evolution equation obtained for $(B_t,D_t)$ by the standard 
``closure relation'', i.e., by setting $\kappa_6\to 0$ in (\ref{eq:inhomogGt}).
 Assume that this equation is stable under all uniformly bounded 
\defem{time-dependent} perturbations of the source term.
\end{enumerate}
If both of the above hold, then the solutions to the closure evolution equation
for $f_t$ remain close to $(B_t,D_t)$ uniformly in $t$.  In particular, the difference in the 
first (covariance) component is always $O(\lambda^{2-q})$.
This would validate the closure equations as good approximations even to 
$t\to\infty$ asymptotic behavior of the covariance function
for all sufficiently small $\lambda$.  Note that no scaling limit needs to be 
taken; in particular, it is \defem{not} claimed that the kinetic scaling limit of 
$B_t$ or $D_t$ would exist.

Another application can be obtained for 
a one-component case with $f_t=W_t^{\lambda}$ as follows: 
Consider the spatially homogeneous case and the exact
evolution equation for $W_t^{\lambda}$ obtained from (\ref{W_expansion}) by 
``dividing out'' $\delta(\mathbf{k}'+\mathbf{k})$ from both sides.
Take $R_t$ to include {all} terms which contain either $\kappa_4$ or 
$\kappa_6$.  Then the remaining pairing terms yield an explicit definition for $F_{s,t}$
such that (\ref{eq:startf}) holds.  (In fact, then $F_{s,t}[W]$ is equal to the right hand side of (\ref{final_sec_ord}).)
Next 
choose ``$\Phi_{s,t}$'' equal to the Boltzmann collision operator
$\lambda^2 \pazocal{C}$ with $\pazocal{C}$ defined in (\ref{eq:defCBP}).  If $W_t$ comes from an $\ell_1$-clustering 
state and the free evolution is sufficiently dispersive, 
then $F_{s,t}[W_s] = \pazocal{C}[W_s] + O(\lambda^2 (1+|t-s|)^{-p})$, 
with $p>1$.
(For instance, the estimates given in \cite{LuSp11}, in Proposition 7.4 and in the Appendix,  prove the bound with $p=3 d/7-1$ for
a nearest neighbor dispersion relation---the computation is essentially the same as in Appendix \ref{sec:nonGauss} below.
Using the notations defined in (\ref{eq:defell1clnorm}), the result also allows to quantify
the dependence on the $\ell_1$-clustering assumption: the bound is proportional to $\norm{\kappa_2(s)}^3_1 \lambda^2 (1+|t-s|)^{-p}$.
Hence, then $p>1$ at least if $d\ge 5$.
However, more careful estimates or the addition of next to nearest neighbor hopping could
improve the bound.)

Whenever this is the case, we obtain a second conjecture about the homogeneous 
DNLS equation.  Suppose that there is a metric for the cumulants such that the following results hold for
sufficiently regular $\omega$ and initial data:
\begin{enumerate}
 \item Suppose that the cumulants remain uniformly bounded in this metric, with an upper bound
 which implies that $\norm{\kappa_2}_1=O(1)$ and that all 
 higher order cumulants have $\ell_1$-clustering norm which is $O(\lambda^{-q})$ for some $0\le q<2$.
 \item Assume that the corresponding Boltzmann-Peierls equation is stable under 
all uniformly bounded \defem{time-dependent} perturbations of the source term.
\end{enumerate}
If both of the above hold, then the solutions $W_\tau$ to the Boltzmann-Peierls 
equation with initial data $W^{\lambda}_0$ are
$O(\lambda^{2-q})$ close to $W^{\lambda}_{\tau\lambda^{-2}}$ uniformly in $\tau$.    In particular, any stationary
limit $\lim_{t\to\infty} W^{\lambda}_t$ can differ from the limit of the solution to the Boltzmann-Peierls equation only by $O(\lambda^{2-q})$.

The main benefit from using the Boltzmann-Peierls equation instead of the 
closure hierarchy concerns the second assumption: the homogeneous 
Boltzmann-Peierls equations enjoy many simplifying properties and a priori estimates, see for 
instance \cite{spohn05}.  For example, they typically
have an entropy functional and an associated ``H-theorem'' which allow to 
classify all stationary solutions to the equation.  There are also many
techniques developed to control the convergence towards the stationary solution.

As a final example, let us remark that 
even when total uniformity in time cannot be achieved, it might be 
possible to go beyond the kinetic time-scales using the above methods.
Consider $F_{s,t}[W_t] = \pazocal{C}[W_t] + 
O(\lambda^2 (1+|t-s|)^{-p})$ for some $0<p<1$. Then the correction is not integrable and
the perturbation $\rho_t$ to the source term is $O(\lambda^2 t^{1-p})$.  This 
remains $O(\lambda^{\vep})$ for all $t=O(\lambda^{-(2-\vep)/(1-p)})$.
Hence, for instance, the earlier nearest neighbor estimate with $d=3$ would
imply that the corrections to the Boltzmann-Peierls equation
remain small for $t=O(\lambda^{-2-4/5+\delta})$, that is,
even for times much longer than the ones implied by the kinetic scaling limit.

\appendix

\section{Combinatorial definition of cumulants}
\label{app:cumulants}

It is shown in \cite{Smith1995} that cumulants are connected to moments 
via a formula which is very similar to the 
definition we used here for the Wick polynomials:  if $I\ne \emptyset$, for any $x\in I$ we have 
\begin{align}\label{eq:iterdefkappa}
 \E[y^{I}] = \sum_{E:x\in E\subset I} \E[y^{I\setminus E}] \kappa[y_E]\, .
\end{align}
(The formula follows straightforwardly from the identity $\partial_x \Mgen = 
\Mgen \partial_x\Cgen$.)
In fact, this formula allows a definition of cumulants which does not rely on 
differentiation or on the existence of exponential moments.
Namely, if $I_0\in \mathcal{I}$ is such that $\E[|y^I|]<\infty$ for all 
$I\subset I_0$, then to each $I\subset I_0$, $I\ne \emptyset$, we can
associate a number $\kappa[y_I]$ by requiring that
$\kappa[y_{I}]=\E[y^{I}]-\sum_{E:x\in E\subsetneq I} \E[y^{I\setminus E}] 
\kappa[y_E]$ with $x=(1,i_1)$.  The definition
is used inductively in $|I|\ge 1$ and it has a unique solution.  (Note that the empty cumulant
$\kappa[y_\emptyset]$ never appears in the moments-to-cumulants formula, and for our purposes 
it can be left undefined.  To be consistent with the derivatives of the generating function, we may for instance set $\kappa[y_\emptyset]:=0$.)
Therefore, these 
numbers have 
to coincide with the standard cumulants in the case when exponential moments 
exist and hence (\ref{eq:iterdefkappa}) holds.

The following known properties of cumulants can then be derived directly from the
above definition using induction in $|I|$ and applying techniques similar  to what
we have used for Wick polynomials in Section \ref{sec:Wickprop}:
\begin{enumerate}
 \item The cumulants are multilinear, in the same manner as was stated
  for Wick polynomials in Proposition \ref{eq:Wickmultlin}.
 \item The moments-to-cumulants expansion (\ref{eq:mtc}) holds.
 \item The cumulants are permutation invariant: if $I'$ is a permutation of
$I$, then $\kappa[y_{I'}]=\kappa[y_I]$.
 \item If joint exponential moments exist, then $\kappa[y_I]= \partial^I_\lambda\Cgen(0)$
 with $\Cgen(\lambda):=\ln \E[\rme^{\lambda\cdot y}]$.
\end{enumerate}
However, let us skip the proofs here.  In the text, we assume
these results to be known and refer to the references for details of their proofs.

\section{Classical particle system with random initial data}
\label{sec:App_Ham}

Consider the evolution of 
$N$ classical particles interacting 
via a polynomial interaction potential, with the initial data given by some random probability
measure.  We show here how it can be recast in the form of the evolution equation discussed 
in Section \ref{sec:cwdyn}.

We consider the random variables $y_j$, indexed by 
$J=\set{(i,n)}_{i,n}$, where $n$ is one of the $N$ different particle labels
and $i=1,2$ differentiates between the particle position and momentum:
we define $y_{(1,n)}(t) := q_n(t)$ and $y_{(2,n)}(t) := p_n(t)$.
If all particles move in $\R$, have the same mass, and have only pair 
interactions via the potential
$V(q) := \sum_{n',n; n'\ne n} \lambda_{nn'} \frac{1}{2 a}(q_n-q_{n'})^a$, $a\ge 
2$ even
and $\lambda_{n'n}=\lambda_{nn'}$, then we have
$\partial_{t}q_n(t)=p_n(t)$ and
\begin{align}
 \partial_{t}p_n(t) = -\sum_{n'\ne n} \lambda_{nn'} (q_n(t)-q_{n'}(t))^{a-1} \, 
.
\end{align}
Here $(q_n-q_{n'})^{a-1} = \sum_{k=0}^{a-1} (-1)^{a-1-k}\binom{a-1}{k} q_n^k 
q_{n'}^{a-1-k}$, and if we define $I_{n,n',k}$ as a sequence of length $a-1$
containing first $k$ repetitions of $(1,n)$ and then $a-1-k$ repetitions of 
$(1,n')$, then by (\ref{inv_wick}) we have
$q_n(t)^k q_{n'}(t)^{a-1-k}=\sum_{V\subset I_{n,n',k}} \w{y(t)^V} 
\E(y(t)^{I_{n,n',k} \backslash V})$.  Define thus as the collection
$\mathcal{I}_{(2,n)}$ all such sequences $U$ which contain $k_1$ repetitions of 
$(1,n)$ followed by $k_2$ repetitions of $(1,n')$
where $n'\ne n$, $k_1,k_2\ge 0$, and $k_1+k_2\le a-1$.  Set also for each $U\in 
\mathcal{I}_{(2,n)}$
\begin{align}
 M^U_{(2,n)}(t) := \lambda_{nn'}\sum_{k=k_1}^{a-1-k_2} \binom{a-1}{k}(-1)^{a-k}
 \sum_{V\subset I_{n,n',k}} \E(y(t)^{I_{n,n',k} \backslash V}) \cf(V=U)\, .
\end{align}
Therefore, (\ref{eq:basicev}) holds for all $j\in J$ after we also define 
$\mathcal{I}_{(1,n)}:=\set{\emptyset,((2,n))}$ and set
$M^\emptyset_{(1,n)}(t) := \E(y_{(2,n)}(t))$ and $M^U_{(1,n)}(t) := 1$ if 
$U=((2,n))$.

\section{Estimation of the first order non-pairing contributions to 
(\ref{W_expansion})}\label{sec:nonGauss}

In this appendix, we show how to estimate the first order non-pairing 
contraction terms in (\ref{W_expansion}). 
To this end we need to make an assumption on the dispersion relation $\omega(\mathbf{k})$.
Let us consider the so called free propagator
\begin{align}\label{eq:defpt}
p_t(\mathbf{x})=
\int_{\mathbb{T}^d} d \mathbf{k} 
 \rme^{\ci 2 \pi \mathbf{x} \cdot \mathbf{k} }
\rme^{-\ci t \omega(\mathbf{k})} \,.
\end{align}
As in the assumption ``(DR2)'' in \cite{LuSp11},
we now suppose that there are $ C, \delta >0$ such that for all $ t \in \mathbb{R}$,
\begin{align}\label{assumption_propagator}
\norm{p_t}_3^3 = \sum_{\mathbf{x} \in \mathbb{Z}^d} |p_t(\mathbf{x})|^3 \leq C 
(1+t^2)^{-(1+\delta)/2} \,.
\end{align}
Furthermore, we assume also the already mentioned $ \ell_1$-clustering property 
(see section \ref{sec:cwdyn}) which we slightly rephrase as follows for each 
cumulant of order $n$: we require that 
\begin{align}\label{eq:defell1clnorm}
\norm{\kappa_n}_1 :=
\sup_{\sigma \in \{\pm 1 \}^n}\sum_{x \in (\Z^{d})^n} \cf(\mathbf{x}_1=0) 
\big\vert \kappa(\psi(\mathbf{x}_1, \sigma_1), \ldots , \psi(\mathbf{x}_n, 
\sigma_n)) \big\vert < \infty \,.
\end{align}
We recall that the physical meaning of this condition is that the cumulants 
decay fast enough in space so that they are summable, once the translational 
invariance is taken into account.

We recall that the first order non-pairing contributions in (\ref{W_expansion}) 
are 
\begin{align}\label{first_order}
 &  
-\ci\lambda \sigma \int_0^t ds \int_{(\T^{d})^3} d\mathbf{k}_1 
d\mathbf{k}_2 d\mathbf{k}_3 \delta(\mathbf{k} - \mathbf{k}_1 - \mathbf{k}_2 
-\mathbf{k}_3)\rme^{\ci s(\sigma \omega +\omega_1 - \sigma \omega_2 - 
\omega_3)}
\nonumber \\ & \quad \times
\kappa[ a(\mathbf{k}_1,-1); a(\mathbf{k}_2,\sigma); a(\mathbf{k}_3,1); a(\mathbf{k}', \sigma') ] 
\end{align}
and a term which is obtained from (\ref{first_order}) by swapping $(\mathbf{k},\sigma) \leftrightarrow (\mathbf{k}',\sigma')$.
As stated in (\ref{trans_inv_cum}), by translation invariance we have 
\begin{align}
\kappa[ a(\mathbf{k}_1,-1); a(\mathbf{k}_2,\sigma); a(\mathbf{k}_3,1); 
a(\mathbf{k}', \sigma') ] = 
\delta(\mathbf{k}_1+\mathbf{k}_2+\mathbf{k}_3+\mathbf{k}') 
\widehat{F}(\mathbf{k}_1,\mathbf{k}_2,\mathbf{k}_3,\mathbf{k}',\sigma, \sigma')
\end{align}
where $F(\mathbf{x}_1, \mathbf{x}_2, \mathbf{x}_3,\mathbf{x}_4, \sigma, \sigma' 
)=\cf(\mathbf{x}_1=0) \kappa[ a(\mathbf{x}_1,-1) ; a(\mathbf{x}_2,\sigma);
a(\mathbf{x}_3,1); a(\mathbf{x}_4, \sigma') ]$.  Clearly, $\norm{F}_1 \le \norm{\kappa_4}_1<\infty$ by the assumed $\ell_1$-clustering.

Therefore, the term in (\ref{first_order}) is bounded by
\begin{align}
& \lambda \bigg\vert \int_0^{t} ds \int_{(\T^{d})^3} 
d\mathbf{k}_1 d\mathbf{k}_2 d\mathbf{k}_3 \delta(\mathbf{k} - \mathbf{k}_1 - 
\mathbf{k}_2 -\mathbf{k}_3)\rme^{\ci s(\sigma \omega +\omega_1 - \sigma \omega_2 - \omega_3)}
 \nonumber \\ & \qquad \times 
\delta(\mathbf{k}_1+\mathbf{k}_2+\mathbf{k}_3+\mathbf{k}') 
\widehat{F}(\mathbf{k}_1,\mathbf{k}_2,\mathbf{k}_3,\mathbf{k}',\sigma, \sigma') 
\bigg\vert \nonumber \\ & \quad 
\le \lambda \delta(\mathbf{k}+\mathbf{k}') \int_0^{t} ds 
\bigg\vert \int_{(\T^{d})^2} d\mathbf{k}_1 d\mathbf{k}_2 \left.
\rme^{\ci s(\omega_1 - \sigma \omega_2 - \omega_3)}
\widehat{F}(\mathbf{k}_1,\mathbf{k}_2,\mathbf{k}_3,\mathbf{k}',\sigma, \sigma')\right|_{\mathbf{k}_3=\mathbf{k}-\mathbf{k}_1-\mathbf{k}_2}
\bigg\vert \nonumber \\ & \quad
\le\lambda  \delta(\mathbf{k}+\mathbf{k}') \int_0^{t} ds 
 \sum_{\mathbf{x}_1, \mathbf{x}_2, \mathbf{x}_3,\mathbf{x}_4}
|F(\mathbf{x}_1, \mathbf{x}_2, \mathbf{x}_3,\mathbf{x}_4, \sigma, \sigma')|
\nonumber \\ & \qquad \times 
\bigg\vert \sum_{\mathbf{y}}\rme^{-\ci 2 \pi \mathbf{k} \cdot (\mathbf{y}-\mathbf{x}_3)}
p_{-s}(\mathbf{y})p_{\sigma s}(\mathbf{y}-\mathbf{x}_2)p_{s}(\mathbf{y}-\mathbf{x}_3) \bigg\vert \nonumber \\ & \quad
 \leq \lambda \delta(\mathbf{k}+\mathbf{k}') \norm{\kappa_4}_1 \int_0^{t} ds\, \norm{p_s}_3^3 
 \leq \lambda C \delta(\mathbf{k}+\mathbf{k}') 
 \norm{\kappa_4}_1 \int_0^{t} ds(1+s^2)^{-(1+\delta)/2} 
\nonumber \\ & \quad
\le \lambda C' \delta(\mathbf{k}+\mathbf{k}') 
\norm{\kappa_4}_1 
\end{align}
where $C'$ is a constant which depends only on $C$ and $\delta$.  We have used the inverse Fourier transform of
$\FT{F}$ and (\ref{eq:defpt}) in the second inequality and H\"older's inequality in the third one.
Since the bound in invariant under the swap $(\mathbf{k},\sigma) \leftrightarrow (\mathbf{k}',\sigma')$, 
it bounds also the second non-pairing contribution in (\ref{W_expansion}).
Therefore, we see that the first order contributions are $O(\lambda)$ uniformly in $t$.

\addcontentsline{toc}{section}{Bibliography}

\end{document}